\newcommand{\cref}{\autoref}
\newcommand{\Cref}{\autoref}
\newtheorem{theorem}{Theorem}
\newtheorem*{claim*}{Claim}
\newtheorem{lemma}[theorem]{Lemma}
\newtheorem{proposition}[theorem]{Proposition}
\newtheorem{q}{Open Problem}
\newtheorem*{conjecture}{Conjecture}
\newcommand{\leqlin}{\leq_{\text{lin}}}
\newcommand{\linearExtension}{linear extension\xspace}
\def\A{\ensuremath{\mathcal{A}}\xspace}
\def\R{\ensuremath{\mathbb{R}}\xspace}
\def\Z{\ensuremath{\mathbb{Z}}\xspace}
\def\Q{\ensuremath{\mathbb{Q}}\xspace}
\def\N{\ensuremath{\mathbb{N}}\xspace}
\def\P{\ensuremath{\mathcal{P}}\xspace}
\def\G{\ensuremath{\mathcal{G}}\xspace}
\def\ETR{\textsc{ETR}\xspace}
\def\ETRAMI{\ensuremath{\textsc{ETR-ami}}\xspace}
\def\UETRAMI{\ensuremath{\textsc{UETR-ami}}\xspace}
\def\NP{\textsc{NP}\xspace}
\def\ETRNew{\textsc{ETR-inv}\xspace}
\def\PETRNew{\textsc{Pla\-nar-ETR-inv}\xspace}
\def\UETR{\textsc{UETR}\xspace}
\def\UETRNew{\textsc{Constrained-UETR}\xspace}
\def\ER{\ensuremath{\exists \mathbb{R}}\xspace}
\def\FER{\ensuremath{\forall \exists \mathbb{R}}\xspace}
\def\EFR{\ensuremath{ \exists \forall \mathbb{R}}\xspace}
\def\AU{\textsc{Area Uni\-ver\-sa\-li\-ty}$_{\geq 0}$\xspace}
\def\AUpos{\textsc{Area Uni\-ver\-sa\-li\-ty}\xspace}
\def\AGP{\textsc{Art Gallery Problem}\xspace}
\def\PPA{\textsc{Prescribed Area Extension}\xspace}
\def\PA{\textsc{Area Universality for Tri\-ples wPFA}\xspace}
\def\PAA{\textsc{Prescribed Area}\xspace}
\newcommand{\const}[1]{\ensuremath{\left\llbracket \, #1 \,\right\rrbracket}\xspace}
\newcommand*\patchAmsMathEnvironmentForLineno[1]{%
	\expandafter\let\csname old#1\expandafter\endcsname\csname #1\endcsname
	\expandafter\let\csname oldend#1\expandafter\endcsname\csname end#1\endcsname
	\renewenvironment{#1}%
	{\linenomath\csname old#1\endcsname}%
	{\csname oldend#1\endcsname\endlinenomath}}%
\newcommand*\patchBothAmsMathEnvironmentsForLineno[1]{%
	\patchAmsMathEnvironmentForLineno{#1}%
	\patchAmsMathEnvironmentForLineno{#1*}}%
\begin{document}

\title{Completeness for the Complexity Class $\forall \exists \mathbb{R}$ and Area-Universality
  \thanks{A preliminary version appears in the proceedings of WG 2018 \cite{kleist2}.
  Moreover, a video presenting this paper is available at \url{https://youtu.be/OQkACiNS66o}.
  Tillmann Miltzow was generously supported by the Netherlands Organisation for Scientific Research (NWO) under project no. 016.Veni.192.250. 
Pawe{\l} Rz\k{a}\.zewski was supported by the ERC Starting Grant CUTACOMBS (grant agreement No.~714704).
  }  
}

\let\email=\url
\def\SC{\scshape}
\let\bf=\bfseries

\author{
	\parbox{6.7cm}{\center
		{\SC Michael Gene Dobbins}\\[3pt]
		\small
		\email{mdobbins@binghamton.edu}\\
		{Binghamton University}}
	\and\hspace{-2cm}
	\parbox{6.7cm}{\center
			{\SC Linda Kleist}\\[3pt]
	\small
	\email{kleist@ibr.cs.tu-bs.de}\\
	{Technische Universit\"at Braunschweig}}
	\and\hspace{-0.6cm}
	\parbox{6.7cm}{\center
		{\SC Tillmann Miltzow}\\[3pt]
		\small
		\email{t.miltzow@uu.nl}\\
		{Utrecht University}}
	\and\hspace{-2cm}
	\parbox{6.7cm}{\center
		{\SC Pawe{\l} Rz\k{a}\.zewski}\\[3pt]
		\small
		\email{p.rzazewski@mini.pw.edu.pl}\\
		{Warsaw University of Technology \\ and University of Warsaw}}
}

\maketitle
\vspace{-12pt}

\begin{abstract}
Exhibiting a deep connection between purely geometric problems and real algebra, the complexity class \ER plays a  crucial role in the study of geometric problems.
Sometimes \ER is referred to as the `real analog' of~\NP. 
While \NP is a class of computational problems that deals with existentially quantified \emph{boolean} variables, \ER deals with existentially quantified \emph{real} variables.

In analogy to $\Pi_2^p$ and $\Sigma_2^p$ in the famous polynomial hierarchy, we study the  complexity classes \FER and \EFR with \emph{real} variables.
Our main interest is  the \AUpos problem, where we are given a plane graph~$G$, and ask if for each assignment of areas to the inner faces of $G$, there exists a straight-line drawing of $G$ realizing the assigned areas.
We conjecture that \AUpos is \FER-complete and support this conjecture  by  proving \ER- and \FER-completeness 
of two variants of \AUpos.
To this end, we introduce tools to prove \FER-hardness and membership.
Finally, we present geometric problems as candidates 
for \FER-complete problems. These problems have connections 
to the concepts of imprecision, robustness, and extendability.  
\end{abstract}

\noindent\textbf{Keywords:} complexity class, existential theory of the reals, universal existential theory of the reals, planar graph, face area, area-universality

\begin{textblock}{20}(0, 11.5)
\includegraphics[width=40px]{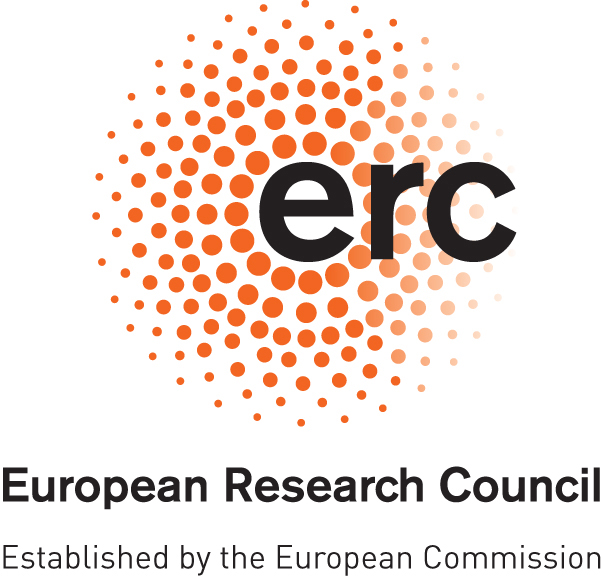}%
\end{textblock}
\begin{textblock}{20}(-0.25, 11.9)
\includegraphics[width=60px]{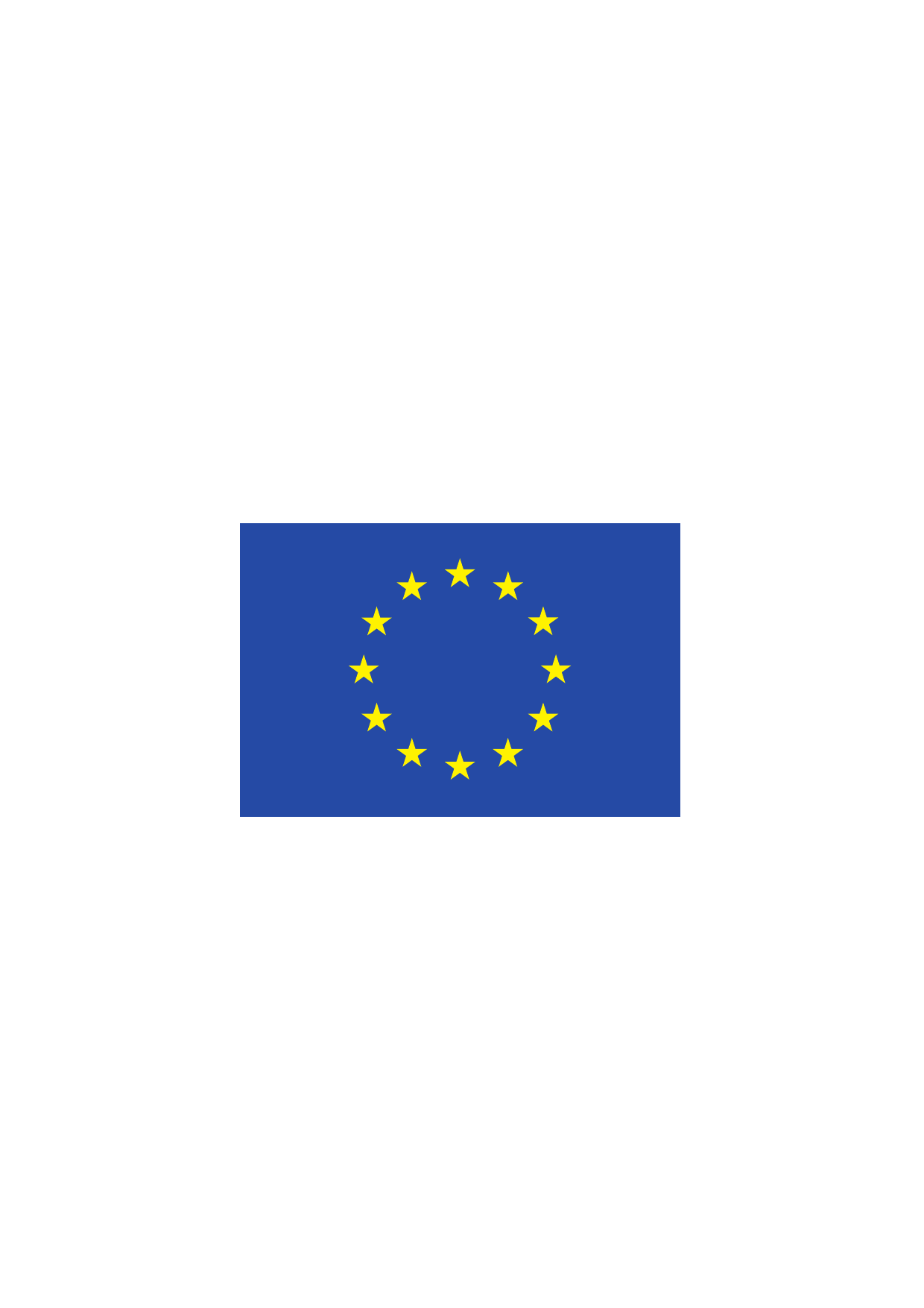}%
\end{textblock}

\section{Introduction}
Size is a very intuitive visual variable. Therefore, statistics are frequently illustrated by distorted maps where regions are scaled according to the population, number of births, average income or some other parameter of interest. Such distorted maps, known as \emph{cartograms}, inspire the investigation of  problems related to face areas 
in straight-line drawings of planar graphs.

A plane graph is a graph together with a \emph{planar drawing}, in which edges may intersect only in common vertices.
Two  planar drawings of a graph are \emph{equivalent} if they have the same outer 
face and the same set of inner faces; i.e., walking on the face boundaries (in counter clockwise direction) yields the  same (directed) cycles.
Let $G$ be a plane graph and let $F$ be the set of inner faces of $G$. 
An {\em area assignment} is a real-valued function $\A \colon F \to \R^+$. 
We say that a drawing $G'$ is {\em \A-realizing}, if 
$G'$ is a straight-line drawing equivalent to $G$ in which 
the area of each $f \in F$ is exactly~$\A(f)$. 
If $G$ has an $\A$-realizing drawing, we say that $\A$ is 
{\em realizable}.
A plane graph $G$ is {\em area-universal} if every  area assignment is realizable.
In this paper, we study the computational problem of deciding whether a given plane graph is area-universal. 
We denote this problem by \AUpos.

\begin{framed}
	\noindent{\bf \AUpos}\\
	\textbf{Input:} A  plane graph $G=(V,E)$.\\    
	\textbf{Question:}  Does every positive area  assignment $\A$ of $G$ have a planar $\A$-realizing drawing?
\end{framed}

With cartograms in mind, it is also natural to allow face areas of 0. 
Note that for a planar realizing drawing, the area assignment must be \emph{positive}, i.e., all assigned areas are positive.
In order to be able to realize face areas of 0, we slightly relax the planarity condition for the set of realizing drawings: We consider a straight-line drawing as a \emph{crossing-free} drawing of a plane graph $G$ if 
for  every $ \varepsilon >0$,  the  vertex coordinates  can  be  perturbed  within  a  ball  of  radius $\varepsilon$ such that the result is a planar drawing equivalent to $G$.
In other words, we extend the set of planar drawings by the (non-planar) \emph{degenerate} drawings that can be obtained as the limit of a sequence of planar drawings.
 In the context of weakly simple polygons, this notion is also known under the term \emph{weak embedding}; see also \cite{weekEmbeddings}.
Because all considered drawings are crossing-free and  straight-line, we simply call them drawings from now on; we use the terms degenerate or non-degenerate to distinguish between planar and crossing-free drawings if necessary. 
This results in the following closely related decision problem.

\begin{framed}
\noindent{\bf \AU}\\
\textbf{Input:} A  plane graph $G=(V,E)$.\\ 
\textbf{Question:}  Does every (non-negative) area  assignment $\A$ of $G$ have a crossing-free $\A$-realizing drawing?
\end{framed}

Interestingly, for a large family of plane graphs, the problems \AUpos and \AU coincide:
A compactness argument shows that for the area-universality of a triangulation (even more generally, for every plane graph with a triangular outer face) it makes in fact no difference whether or not faces of 0 area are allowed, i.e., all positive area assignments of a triangulation are realizable if and only if all non-negative area assignments are realizable. For more details, we refer to \cite[Corollary 5]{kleist1Journal} and \cite[Corollary 3]{kleistPhD}.
As we will also point out in \cref{sec:stateOfTheArt}, allowing face areas of 0 turned out to be a useful tool to significantly simplify proofs and enable (more) elegant proofs, e.g., in the context of disproving area-universality \cite{kleist1Journal}.

\subsection{Introduction to the Complexity Class \FER}

When investigating geometric problems, one often discovers that their instances can be described by  a system of polynomial equations and inequalities $\Phi$, so that real-valued variable assignments that satisfy $\Phi$ correspond to solutions of the original geometric problem.
The variables encode the configuration of the geometric objects and the quantifier-free formula~$\Phi$ describes the relations between them. 
\textsc{Existential Theory of the Reals} (\ETR) is a computational problem that takes a first-order formula containing only existential quantifiers: $\exists X=(X_1,X_2,\ldots,X_n) \colon \Phi(X)$ where $\Phi$ may contain the symbols \[0, 1, +, \cdot , =, <, \land, \neg, (, ), X_1, \dots,X_n\] and asks whether it is true or not over the reals. 
 An example of such an instance is the following formula\footnote{We use the notation of Matou\v{s}ek~\cite{matousekSegments} where capital letters denote the names of variables, while lower-case letters denote the values of variables (numbers or logic values).}: 
$\exists (X_1, X_2)  : (X_1^2 + X_2^2 > 1) \, \land \, (3X_1 + 2X_2 = 10).$

The complexity class \ER consists of all decision problems that are many-one reducible to \ETR by a Turing machine in a polynomial number of steps.
Many natural geometric problems appear to be \ER-complete, i.e.,\ they are in \ER~and \ETR is reducible to them.
Prominent examples include separability of pseudoline arrangements~\cite{matousekSegments,Mnev,DBLP:journals/mst/SchaeferS17},  recognizing unit disk graphs~\cite{DBLP:journals/dcg/KangM12} and intersection graphs of segments~\cite{matousekSegments,DBLP:conf/gd/Schaefer09},  the art gallery problem~\cite{AbrahamsenAM18STOC}, geometric packing~\cite{PackETR}, or training neural networks~\cite{NeuronETR}.

Despite tremendous work in real algebraic geometry, we  do not know any simple algorithm to decide \ETR.
Consequently, the \ER-completeness of  any of the problems mentioned above implies that there is little hope that they could be solved using some simple algorithms, and most known algorithms have some algebraic flavor.
In fact, we usually need to employ certain algebraic tools even to show that these problems are \emph{decidable}.
In conclusion, the \ER-completeness of many geometric problems reflects deep algebraic connections between these problems and real algebra. 

While geometric problems that are \ER-complete usually ask for
the  existence of certain objects, satisfying some semi-algebraic 
properties, the nature of area-universality seems to be different. We therefore 
define the  complexity class \FER as the set of all problems that reduce 
in polynomial time to \textsc{Universal Existential Theory of the Reals} (\UETR).
The input of \UETR is a first order formula over the reals in prenex form, which starts with a block of universal quantifiers followed by a block of existential quantifiers and is otherwise quantifier-free. 
We ask if the formula is true.
An example of such an instance is: 
 \[ \forall Y_1,Y_2 \,  \, \exists X_1  : (Y_1^2 + X_1^2 \geq 1) \, \land \, (3Y_1 + 2Y_2 = 10X_1).\]
\begin{framed}
\noindent{\bf{\textsc{Universal Existential Theory of the Reals} (\UETR)}}\\
\textbf{Input:} A formula $\left(\forall \, Y=(Y_1,Y_2,\ldots,Y_m)\right) \left(\exists \, X=(X_1,X_2,\ldots,X_n)\right) \colon \Phi(X,Y)$ over the reals
where $\Phi(X,Y)$ is quantifier-free.\\
\textbf{Question:} Is the input formula true?
\end{framed}

\paragraph{Relation of the complexity classes.} The complexity class \EFR is defined analogously. Clearly, \ER is contained in both, \EFR and \FER. It is easy to observe and well-known that \NP is contained in \ER. Highly non-trivial is the containment of \EFR and \FER in PSPACE, which follows from a more general result that deciding first-order formulae over the reals with bounded number 
of quantifier blocks is in PSPACE~\cite{basu2006algorithms}.
For all we know, all these complexity classes could collapse, as it is open whether \NP and PSPACE constitute two different or the same complexity class, see \cref{fig:ComplexityZoo}.
However, \ER $\neq$ \FER can be believed with similar confidence as NP\,$\neq \Pi^p_2$.
In addition, it is known that the algebraic expressibility of \FER-formulae is larger than \ER-formulae~\cite{davenport1988real}.

\begin{figure}[b!]
\centering
\includegraphics[page=8]{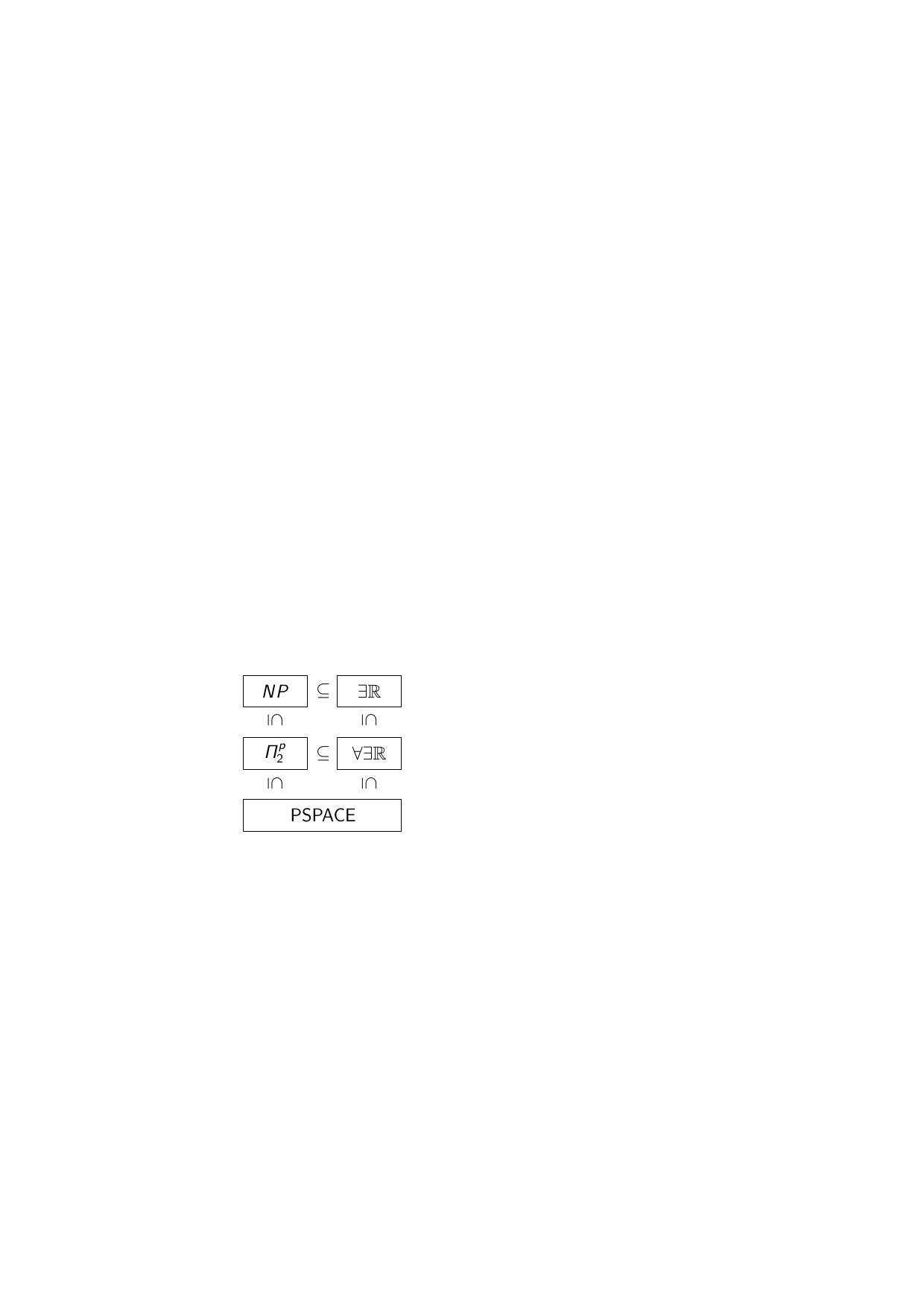}
\caption{Containment diagram of the complexity classes.}
\label{fig:ComplexityZoo}
\end{figure}

It is worth mentioning that Blum et.\ al.~\cite{blum} also introduced 
a hierarchy of complexity classes analogous to the complexity class NP, but over the reals (or other rings). 
Their canonical model of computation is the so-called Blum-Shub-Smale machine (BSS). 
The main difference between these approaches is that BSS accepts real numbers as input and the arithmetic operations over the reals can be performed at unit cost, while the classes discussed in our paper (\ER, \FER, \EFR) work with ordinary Turing machines, accepting only finite strings over finite alphabets.

\subsection{Our Results}
First of all, we show that both variants of area-universality belong to \FER.

\begin{restatable}{proposition}{AreaUniInFER}\label{pro:Containment}
\AUpos and \AU are contained in~\FER.
\end{restatable}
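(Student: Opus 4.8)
The plan is to encode the area-universality question directly as a \UETR formula, following the natural geometric description. The universal variables $Y$ will range over all possible face area assignments $\A\colon F\to\RPlus$ (one real per inner face), and the existential variables $X$ will describe the coordinates of a candidate drawing (two reals per vertex of $G$). The quantifier-free part $\Phi$ must assert: (i) the drawing $X$ is equivalent to $G$ — same rotation system and same outer face, and (ii) for each inner face $f$, the signed area of $f$ in the drawing $X$ equals the prescribed value $Y_f$. We then claim that $G$ is area-universal if and only if the sentence $\forall Y \; \exists X \colon \Phi$ is true, and that $\Phi$ has polynomial size.

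First I would write down the area constraints. For an inner face $f$ with boundary vertices $v_1,\dots,v_k$ in the rotation-system order, the signed area is the standard shoelace polynomial $\tfrac12\sum_{i}(x_{v_i}y_{v_{i+1}} - x_{v_{i+1}}y_{v_i})$, which is a quadratic polynomial in the $X$-variables; the constraint is that this equals $Y_f$, a polynomial equation of degree $2$ and of size linear in the face length. Second I would handle the combinatorial-equivalence constraints. Preserving the rotation system amounts to fixing, around each vertex $v$, the cyclic order of its neighbours; this can be enforced by requiring the appropriate orientation (sign of a $2\times2$ determinant, i.e. a quadratic inequality) for each consecutive pair of edges in the prescribed order, and similarly one pins down the outer face by requiring its boundary to be traversed with the correct orientation. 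One also needs the drawing to be crossing-free; since a drawing with the correct rotation system and correct outer-face orientation in which every inner face has positive area is automatically a plane embedding, it suffices to combine the orientation constraints with positivity of all face areas — and positivity comes for free on the relevant inputs since $Y_f\in\RPlus$. Collecting all these polynomial (in)equalities gives a quantifier-free $\Phi$ of size polynomial in $|V|+|E|$.

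For correctness: if $G$ is area-universal, then for every assignment $Y$ there is by definition an equivalent straight-line drawing realizing those areas, and its vertex coordinates give a witness $X$, so the sentence is true; conversely, any $X$ satisfying $\Phi$ for a given $Y$ is, by the equivalence constraints, a valid equivalent drawing realizing $Y$. One subtlety is the treatment of degenerate assignments (some $Y_f$ zero, or the total area not matching) — but since the area of a non-degenerate simple polygon is strictly positive, on any $Y$ with some coordinate zero no realizing drawing exists and we do not need the sentence to be true there; the standard fix is either to restrict the universal block to strictly positive values (expressible, or handled by interpreting \RPlus as the nonnegative reals and noting the only honest requirement is on positive assignments) or to observe that area-universality as defined already only demands realizability of the genuinely realizable assignments.

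The step I expect to be the main (though still routine) obstacle is getting the combinatorial-equivalence encoding exactly right: writing a polynomial-size quantifier-free formula that captures ``same rotation system, same outer face, crossing-free'' without circular reasoning, and making sure this really is equivalent to the drawing being an equivalent plane embedding. Once that encoding is in place, the area constraints are immediate and the reduction to \UETR, together with the correctness argument, follows directly, establishing that \AU is in \FER.
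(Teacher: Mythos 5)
Your overall setup matches the paper's: universally quantify the face areas, existentially quantify the vertex coordinates, use the shoelace polynomial to tie the area of each inner face to its universal variable, and encode combinatorial equivalence by sign conditions on determinants. However, there is a genuine gap in how you dispose of crossing-freeness. You claim that ``a drawing with the correct rotation system and correct outer-face orientation in which every inner face has positive area is automatically a plane embedding,'' and you use this to avoid encoding non-crossing at all. This claim is not proved and is doubtful as stated: the shoelace constraint only fixes the \emph{signed} area of the closed boundary walk, and a positive signed area does not force that walk to be a simple polygon (a self-crossing quadrilateral or pentagon can have any prescribed positive signed area), so positivity of all face areas does not by itself rule out crossings between non-adjacent edges of the same face, let alone edges of different faces. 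Arguments that do derive embeddability from face-level conditions need simplicity (or convexity) of every face polygon plus a global winding-number count, none of which your constraints provide. The paper sidesteps this entirely by adding an explicit $\NoCross$ predicate for every pair of independent edges (still only polynomially many constraints), so no such implication ever has to be proved.

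A second, related problem is your rotation-system encoding. Requiring a fixed orientation sign for each consecutive pair of edges around a vertex is not correct: whenever a vertex is incident to a reflex angle of one of its faces (which happens, e.g., at every convex-hull vertex with respect to the outer face), exactly one of these sign conditions must be reversed, and \emph{which} one is reflex depends on the drawing, not on the input, so it cannot be hard-wired; the paper handles this with a disjunction over the $d$ possible positions of the reflex angle. Moreover, even with all local sign conditions satisfied, the angles around a vertex could sum to $4\pi$ rather than $2\pi$; the paper adds an explicit winding-number-one condition (splitting the neighbours into two contiguous intervals by the sign of their first coordinate relative to $v$). You flag the equivalence encoding as ``the main though routine obstacle,'' but these two points (the reflex/winding issues and, above all, the unproved planarity-for-free claim) are exactly where the work lies, and without the explicit non-crossing constraints your formula does not certify that the witness $X$ is a crossing-free equivalent drawing, so the correctness direction ``$\Phi$ satisfied $\Rightarrow$ valid realizing drawing'' fails as argued. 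The fix is simply to follow the paper: include $\NoCross$, the reflex-aware $\Order$ predicate with the winding condition, the shoelace $\Area$ constraints, and pin the outer face by requiring its area to equal the sum of the inner face areas.
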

The idea of the proof (presented in Section \ref{sec:auinfer})  is to use a block of universal quantifiers to describe the area assignment and the block of existential quantifiers to describe the placement of the vertices of the drawing of~$G$.
While it is straightforward to show containement for \AUpos, the challenge for proving \cref{pro:Containment} for \AU lies in the fact that we allow for degenerate realizing drawings. \cref{thm:CookLevin} enables us to exploit the fact that testing whether a drawing is crossing-free lies in $P$. Interestingly, this is far more intricate than deciding whether a drawing is planar.

Recently, Erickson, van der Hoog and Miltzow showed a result analogous 
to the Cook Levin Theorem for \ER-membership~\cite{RobustComputation}.
It is straight-forward to generalize their proof to \FER.
The gist of the theorem is that we can can
use real RAM algorithms to establish \FER-membership.
(See \cref{sec:Cook-Levin} for precise definitions.)
\begin{restatable}{theorem}{CookLevin} \label{thm:CookLevin}
	For every discrete decision problem $Q$,
	there exists a real challenge-\-veri\-fi\-ca\-tion algorithm
	if and only if $Q\in \FER$.
\end{restatable}

In view of \cref{pro:Containment}, we believe that a stronger statement holds. 
\begin{conjecture}
\AUpos is \FER-complete.  
\end{conjecture}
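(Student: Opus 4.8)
The plan is to prove membership in $\FER$ (already done in \autoref{pro:Containment}) and then establish $\FER$-hardness by reducing from a suitable $\FER$-complete problem. The natural source problem is a constrained variant of $\UETR$ — indeed the excerpt foreshadows such variants (e.g.\ $\UETRNew$) being introduced precisely as tools — where the universal variables range over a box and the inner existential formula is a conjunction of polynomial (in)equalities of bounded degree. The overall strategy mirrors the classical proof that $\stretch$ is $\ER$-complete: build a ``logic engine'' out of the geometric gadget at hand. Here the gadget is a plane graph together with face areas; assigning an area to a face constrains the drawing, and area-universality asks whether \emph{every} area assignment can be met. So the universal variables of the $\UETR$ instance will be encoded by the free choice of certain face areas, and the existentially quantified real variables will be encoded by coordinates of vertices in the drawing.

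First I would set up the encoding of variables. Building on the observation in the introduction that stacked triangulations are area-universal because each prescribed small-triangle area pins the apex vertex to a line, I would design gadgets in which a prescribed area for a designated face forces a distinguished vertex to lie on a prescribed line, so that a real coordinate (the position along that line) becomes a controllable variable whose value is an affine function of the prescribed area. Universally quantified variables $Y_j$ of the $\UETR$ instance are the areas we are \emph{not} allowed to choose — they are handed to us by the adversary — while existential variables $X_i$ are realized by auxiliary vertices whose positions we \emph{do} get to choose. Second, I would construct arithmetic gadgets — addition and multiplication of such coordinate-variables — realized as local plane-graph configurations whose area constraints enforce the corresponding algebraic identities (again leveraging that intersecting ``area lines'' perform linear algebra, and composing them with a suitable projective/inversion gadget to get multiplication, in the spirit of the $\ETRNew$-style toolkit). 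Third, I would build an equality/inequality-testing gadget: a face whose realizability is equivalent to two coordinate-variables being equal (resp.\ one not exceeding the other). Fourth, I would wire these together so that a given $\UETR$ formula $(\forall \vec Y)(\exists \vec X)\,\Phi$ is true if and only if the assembled plane graph $G_\Phi$ is area-universal: for every area assignment (every choice of $\vec Y$ by the adversary, plus ``slack'' faces whose areas are forced to lie in admissible ranges by auxiliary universal-to-range gadgets) there is a realizing drawing (a witness $\vec X$) exactly when $\Phi$ holds. Care is needed to handle area assignments that fall \emph{outside} the intended operating range of a gadget: one must ensure that such degenerate inputs are still realizable (otherwise $G_\Phi$ is trivially non-area-universal), which typically means padding the construction so that out-of-range areas relax all constraints rather than making the instance infeasible.

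The main obstacle I anticipate is exactly this last point combined with the fragile ``one-sided'' nature of area-universality. In an $\ER$-hardness proof one builds a gadget graph and a \emph{single} target area assignment and argues equivalence; here the quantifier alternation means the adversary controls a continuum of area assignments, so every gadget must behave correctly — or at least harmlessly — across the \emph{entire} range of areas the adversary might pick, not just at one nominal value. Guaranteeing that the multiplication and comparison gadgets degrade gracefully (remaining realizable, hence not creating spurious non-universality) outside their linear operating window, while still being \emph{tight} inside it, is the delicate part; it is essentially why the conjecture is stated as a conjecture rather than a theorem, and why the paper instead proves $\FER$-completeness only for modified versions of $\AU$. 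A secondary difficulty is controlling the bit-complexity and ensuring that the forced ranges for the ``slack'' universal faces can themselves be enforced by gadgets (a bootstrapping issue: we need a universal-variable-to-bounded-box reduction for $\UETR$, which is why a constrained $\UETR$ variant is the right starting point). If a full proof proves elusive, the fallback — and what the remainder of the paper does — is to carry out this program for a relaxed problem such as $\PA$ or $\VU$, where one is allowed to restrict attention to a family of area assignments (equivalently, to add explicit constraints on the universal block), sidestepping the graceful-degradation requirement.
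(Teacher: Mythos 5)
The statement you are addressing is stated in the paper as a \emph{conjecture}: the paper offers no proof of it, and what you have written is a research plan rather than a proof, as you yourself acknowledge in the final paragraph. The genuine gap is precisely the step you defer. In \AU the input is \emph{only} the plane graph: there is no partial area assignment, no fixed vertex positions, and no distinguished constants. Yet your encoding of variables already presupposes such scaffolding --- ``a prescribed area for a designated face forces a distinguished vertex to lie on a prescribed line'' assumes some face areas are fixed by the reduction, and your arithmetic gadgets (in the von-Staudt spirit) need reference points, a unit length, and collinearity constraints. In the paper's hardness results this scaffolding is exactly what the \emph{relaxations} supply: \PA allows a partial area assignment $\A'$ on some triples (and drops planarity), \PPA allows fixed vertex positions, and \VU allows a partial volume assignment; the reductions lean on these prescriptions to pin points to lines and build rigid frames. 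For \AU proper every single face area is under the adversary's control, so any face you intend as a ``reference'' or ``constant'' can be perturbed, and your proposed ``universal-to-range'' and graceful-degradation gadgets --- which must keep the graph realizable for \emph{all} out-of-range assignments while remaining tight in-range, using no pinned geometry whatsoever --- are asserted but not constructed. No such gadgets are known, and constructing them is the open content of the conjecture, not a technicality one can wave through.

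So the proposal correctly identifies the high-level architecture (reduce from a constrained \UETR variant, encode universal variables by adversarial areas and existential variables by vertex coordinates, build arithmetic gadgets), which indeed mirrors how the paper attacks the \emph{variants} \PA and \VU. But it does not prove the statement: the missing ideas --- how to simulate prescribed areas/fixed positions inside a purely universal instance, and how to make every gadget realizable under arbitrary adversarial areas without losing its in-range rigidity --- are exactly why the paper stops at the conjecture and proves completeness only for the modified problems.
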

While this conjecture, if true, would show that \AUpos is a difficult problem in an algebraic and combinatorial sense, it would also give  the first known natural geometric problem that is complete for \FER.

Unfortunately, in contrast to most other algorithmic problems,
studying area-univer\-sal\-ity seems very difficult even on small instances.
This is particularly unfortunate since most hardness results are based
on constructing small gadgets that are  well understood and then combined in a very controlled and clever way.
As discussed in more detail in \cref{sec:stateOfTheArt}, there exist graphs on nine vertices for which we do not yet know whether or not they
are area-universal. This is due to (1) the continuous 
nature of possible area assignments 
and the vertex placements,  
 and (2) the doubly quantified interplay between the two.
Due to those difficulties, we focus on restricted variants
that give us some control over parts
of the final drawing. 
Although we are still some steps away from resolving our
conjecture, we believe that our results bring
us a big step forward.
Even more, many of the presented results and tools
are interesting by themselves
especially in the context 
of existing \ER-hardness results.
We consider two variants 
of \AUpos, each approaching the conjecture from 
a different direction.

Firstly, a natural relaxation is to 
drop the planarity restriction.
For a plane graph $G$ with vertex set $V$, the 
{\em face hypergraph} of $G$ has vertex set $V$, 
and its edges correspond to sets of vertices forming 
the faces (such hypergraphs were studied e.g. by Dvo\v{r}\'{a}k et al.~\cite{Dvorak}). Recall that a \emph{triangulation} is a maximal planar graph, i.e., every face is incident to three edges. Therefore, the 
face hypergraph of a plane triangulation is $3$-uniform, i.e.,\ each hyperedge has $3$ vertices. 
Clearly, \AUpos can be equivalently formulated in 
the language of face hypergraphs. This relation motivates 
the following  version of the problem in which we are fixing some of the areas, i.e, we consider instances \emph{with a partial fixed assignment (wPFA)}.

\begin{framed}
\noindent{\bf \PA}\\
\noindent{\bf Input:}     
A set $V$ of vertices, a collection of vertex-triples 
$F\subseteq{V\choose 3}$, and a partial area assignment 
$\A' \colon F' \to \R^+$ for some $F' \subseteq F$.\\
\noindent{\bf Question:} For every 
$\A \colon F \to \R^+$ 
with $\A(f) = \A'(f)$ 
for all $f \in F'$, does there exist a placement of $V$ in 
the plane, such that the triangle of each $f \in F$ has area  $\A(f)$?
\end{framed}

\noindent
Note that \PA differs from \AUpos in two aspects: Firstly, we consider arbitrary triples in contrast to interior disjoint faces, and secondly, some areas are fixed. However, we are able to restrict to a linear number of triples and show the following hardness result.

\begin{restatable}{theorem}{triples}\label{thm:hyper}
 \PA is \FER-complete, even if the number of triples is linear in the number of vertices.
\end{restatable}
For the proof of \cref{thm:hyper} we use gadgets similar 
to the \emph{von Staudt constructions} used to show 
the \ER-hardness of order-types (we refer the reader to the paper of Matou\v{s}ek~\cite{matousekSegments} for more information).
The proof can be found in \cref{sec:PA}.
\medskip
\medskip

Our second result concerns a variant, 
where we investigate the complexity of realizing 
a specific area assignment.
 \PAA denotes the following problem: 
 \begin{framed}
 	\noindent{\bf \PAA}\\
 	{\bf Input:}     A plane graph $G=(V,E)$ with an area assignment $\mathcal A$.\\
 	{\bf Question:}  Does there exist an \A-realizing drawing of $G$?
 \end{framed}
 
We believe that \PAA is \ER-complete. However, we are only able to show hardness of an \emph{extension} version of the problem, 
where some vertex positions are fixed and we seek a
placement of the remaining vertices realizing the prescribed areas. We call this problem \PPA.

\begin{framed}
\noindent{\bf \PPA}\\
{\bf Input:}     A plane graph $G=(V,E)$, a (positive) area assignment $\mathcal A$, fixed positions for $V'\subseteq V$.\\
{\bf Question:}  Does there exist a (planar)  \A-realizing drawing of $G$ respecting the positions of all vertices in $V'$?
\end{framed}

We show the following hardness result.
\begin{restatable}{theorem}{PPAthm}\label{thm:PPA}
	\PPA is \ER-complete. 
\end{restatable}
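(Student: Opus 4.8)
\textbf{Proof proposal for \autoref{thm:PPA}.}

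The plan is to prove membership in \ER and \ER-hardness separately. Membership is the easy direction: given the graph $G$, the area assignment $\A$, and the fixed positions of $V' \subseteq V$, one writes an existential formula over the reals with two variables per free vertex $v \in V \setminus V'$ encoding its coordinates. The quantifier-free part is a conjunction of (i) the signed-area equation $2\cdot\text{area}(f) = \A(f)$ for every inner face $f$, where the signed area of a face is a polynomial (in fact a sum of $2\times 2$ determinants along the boundary cycle) in the vertex coordinates; and (ii) crossing-freeness, which can be expressed by requiring, for each pair of edges that may not cross, that the appropriate orientation predicates have the right signs — equivalently, the drawing must be a valid embedding consistent with the prescribed rotation system and outer face. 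All of these are polynomial (in)equalities of polynomial size, so \PPA reduces to \ETR and hence lies in \ER.

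For \ER-hardness I would reduce from \stretch, or more conveniently from a von Staudt–style \ER-complete problem such as \ETRNew / \ETROrd introduced earlier in the paper, following the same template used to prove \autoref{thm:hyper}. The key point is that with \emph{fixed} vertices available, a single triangular face with prescribed area becomes a genuine \emph{linear} constraint confining the third vertex to a line (exactly as in the stacked-triangulation argument from the introduction), and with two such faces sharing a vertex one gets a point intersection; chaining these gives the classical von Staudt gadgets realizing addition and multiplication of coordinates. Concretely, I would: first fix a ``scaffold'' of vertices at rational coordinates forming a coarse triangulation that pins down a coordinate frame; then, for each variable of the input \ETR-instance, introduce a free vertex whose horizontal position encodes the variable's value, constrained to a horizontal line by one prescribed-area triangle against two fixed scaffold vertices; then implement each polynomial constraint of the input formula by attaching an addition- or multiplication-gadget, each a small plane subgraph of triangles whose prescribed areas force the collinearities that make the von Staudt construction compute the desired arithmetic relation. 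Finally one checks that the whole construction is realizable if and only if the original \ETR-instance is satisfiable, and that it has polynomial size and can be built in polynomial time.

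The main obstacle — and the place where the bulk of the work lies — is \emph{planarity and crossing-freeness of the gadget graph}, together with controlling the sign/orientation of every prescribed-area triangle so that area constraints translate into the intended \emph{oriented} collinearities rather than their reflections. In the face-hypergraph setting of \autoref{thm:hyper} one has the freedom to pick arbitrary triples and ignore planarity, so the von Staudt gadgets can be wired together freely; here every gadget must be drawn without crossings, the gadgets must be linked into one connected plane graph consistent with a single rotation system and outer face, and the free vertices must be forced to lie in the half-plane (or bounded region) where the signed areas have the correct sign. I expect to handle this by routing the gadgets in disjoint ``corridors'' of the plane, separated by fixed scaffold vertices, using 1-subdivision-style tricks (cf.\ Kleist~\cite{kleist1}) or extra fixed vertices to re-route connections, and by adding auxiliary fixed vertices that bound each free vertex's feasible region so the relevant determinants keep a fixed sign. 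Once the geometry of a single arithmetic gadget with fixed anchor points is nailed down, assembling the global reduction and verifying correctness is routine bookkeeping.
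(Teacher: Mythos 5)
Your membership argument matches the standard one and is fine. The hardness plan, however, has a genuine gap exactly where you locate "the bulk of the work". Transplanting the von Staudt gadgets from \autoref{thm:hyper} does not survive the passage from triples to faces of a plane graph: those gadgets rely on prescribing areas of \emph{arbitrary} triples, including zero-area triples to force collinearity and triangles that overlap each other geometrically (e.g.\ the two triangles of the addition gadget share the line $\ell$, and the parallel-line/trapezoid constructions overlap heavily). In \PPA every prescribed area must be the area of an \emph{inner face} of one plane graph, drawn crossing-free with a fixed rotation system and outer face, so overlapping triangles and degenerate (zero-area) faces are simply not available. Likewise, your plan to restore planarity by "routing gadgets in disjoint corridors" clashes with the von Staudt template itself, which needs all variable points on one common line and repeated access to that line from many gadgets; when the constraint--variable incidence structure is non-planar this wiring is precisely the obstruction, and "corridors" do not resolve it.

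The paper resolves both issues differently. First, it moves the crossing problem to the formula level: it proves that \PETRNew (ETRINV with a \emph{planar} incidence graph and solutions in $(0,5)$) is \ER-complete, using a small crossing gadget that introduces copy variables $X',Y',Z$ with three addition constraints. Second, it reduces from \PETRNew by laying the planar incidence graph out on a grid of \emph{fixed} vertices and building gadgets along it: a variable gadget encoding the value as the position of one flexible vertex on a fixed segment (the interval promise $(0,\lambda)$ with $\lambda=5$ is what makes values fit on such segments -- your sketch never uses the bounded-range promise), wire and splitter gadgets that transport and duplicate this value purely by face-area bookkeeping, an inversion gadget in which $X\cdot Y=1$ is enforced by a single triangle of area $1/(2\lambda^2)$ (reducing from ETRINV means no von Staudt multiplication gadget is ever needed), and an addition gadget whose face areas force $a_x+a_y=a_z$. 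So the missing ideas in your proposal are exactly the planar formula variant and the wire/splitter/inversion machinery; without them the step "assemble the gadgets into one crossing-free plane graph" would fail rather than being routine bookkeeping.
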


The proof of \autoref{thm:PPA} can be found in~\cref{sec:PPA}.
Let us point out that, along with the recent result of Lubiw et al.~\cite{LubiwDrawingExtension}, 
\autoref{thm:PPA} is one of the \emph{first} \ER-hardness results concerning drawings of planar graphs in the plane. 
\medskip

To conclude and to motivate further research, we present problems that are interesting candidates for \FER-complete problems in~\cref{sec:Others}. These problems are linked to notions such as  robustness, imprecision, and extendability. 
To the best of our knowledge, no further problems are
known to be \FER-complete.

\subsection{Area-Universality -- State of the Art \& Related Work}\label{sec:stateOfTheArt}
In this section, we present the state of the art of area-universality. We present some ideas and challenges of how to prove and disprove area-universality.

\paragraph{Area-Universal Graph Families.}
Despite the fact that area-universality seems to be a strong property, it is straightforward to observe that it holds for \emph{stacked triangulations}, also known as \emph{plane 3-trees} or \emph{Apollonian networks}.
A stacked triangulation $T$ is defined recursively by subdividing a triangle $t$ of a stacked triangulation~$T'$ into three smaller triangles.
An area assignment of $T$ can be realized by first realizing~$T'$ so that $t$ has the total area of the three smaller triangles, and then subdividing~$t$ accordingly.
Moreover, it is easy to see that if a graph is area-universal, then each of its subgraphs is also area-universal. 
Thus, subgraphs of stacked triangulations are area-universal. Biedl and Vel\'azquez~\cite{biedl} additionally studied the grid size of realizing drawings of subgraphs of stacked triangulations.

The first graphs that were shown to be area-universal were plane cubic graphs, as proven already in 1992 by Thomassen~\cite{thomassen}. However, unlike for stacked triangulations, this inductive argument is highly technical. The proof makes extensive use of the fact that if a vertex $v$ of degree 3 is placed collinearly with two of its neighbors $u$ and $w$, then $v$ may be freely moved on the segment between $u$ and $w$. This fact allows to lay out a path in  a straight-line fashion in a step of the induction. Alternatively, an edge is contracted. Consequently, the resulting realizing drawings are highly degenerate.

More recently, Kleist~\cite{kleist1,kleist1Journal}  showed that the 1-subdivision of every plane graph is area-universal. In other words, every area assignment of a plane graph is realizable if the straight-line property is relaxed so that each edge may have  one bend. This result builds on the fact that every rectangular layout (dissection of a rectangle into rectangles) has a weakly-equivalent rectangular layout realizing prescribed areas~\cite{Eppstein,felsner,Floorplans}. In this setting, positive area assignments are guaranteed to have non-degenerate (planar) realizing drawings.

For straight-line drawings, however, deciding if a graph is area-universal seems to be a challenging problem even for very small triangulations. [The interested reader may play with 4-connected triangulations on seven or eight vertices or, alternatively, with the graph on nine vertices in \cref{fig:triang9}, see our discussion later.] 
Kleist~\cite{kleist3} developed  a method to prove the area-universality of triangulations  with special vertex orderings. The method is based on a sufficient criterion for area-universality that only requires the investigation of one area assignment. Among others, the criterion uses the fact that for a triangulation, the realizability of an area assignment can be characterized by a real solution of an equation system that describes the area of each triangular face with a determinant equation. The machinery can be used to characterize the area-universality of a graph family, called accordion graphs: An accordion graph $\mathcal K_\ell$ can be obtained from the plane octahedron graph by subdividing an edge of the central triangle by~$\ell$ vertices and introducing $2\ell$ additional edges such that the new vertices are of degree 4. For an example consider \cref{fig:acc}. 
Obviously, accordion graphs are structurally very similar. However, in terms of area-universality there is a surprising distinction: An accordion graph is area-universal if and only if $\ell$ is odd. This shows that there exists a very fine line between graphs that are and those that are not area-universal. In particular, this exhibits the sensitivity of area-universality to small
local changes. Thus, area-universality might be a \emph{global} property.

\begin{figure}[b]
	\centering
	\begin{subfigure}[t]{.24\textwidth}
		\centering
		\includegraphics[page=2]{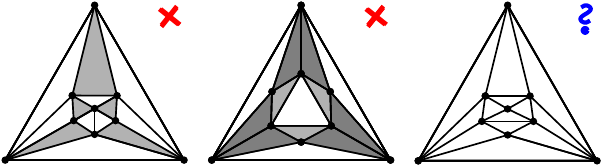}
		\caption{An accordion graph with $\ell=3$.}
		\label{fig:acc}
	\end{subfigure}
	\hfill
	\begin{subfigure}[t]{.7\textwidth}
		\centering
		\includegraphics{triang9}
		\caption{These three plane triangulations are drawings of the same abstract planar graph. While the left two are not area-universal, it is open whether the right one is.}
		\label{fig:triang9}
	\end{subfigure}
	\caption{Illustration of interesting triangulations.}
\end{figure}
%

As a further interesting candidate, it is conjectured that plane bipartite graphs are area-universal. Evans et al.~\cite{kleistQuad,kleistPhD} developed the first tools to tackle this problem, and used them to confirm the conjecture for large subclasses of plane bipartite graphs, as well as for small plane bipartite graphs on up to 13 vertices. 

\paragraph{Disproving Area-Universality.}
In terms of negative results, it was already known to Rin\-gel~\cite{ringel1990equiareal} in 1990 that the octahedron graph is not area-universal. For a long time, the octahedron and its supergraphs remained the only known non-area-universal graphs.
Kleist~\cite{kleist1,kleist1Journal} introduced the first 
infinite family of non-area-universal graphs, that is not defined 
as the supergraphs of a single non-area-universal graph. 
More precisely, she presented a simple combinatorial argument to show that all Eulerian triangulations (different from the triangle) are not area-universal.  Note that the octahedron is also an Eulerian triangulation. Among other things, the elegance of the argument relies on the fact of using face areas of 0. Recall that  for the area-universality of a triangulation, it makes no difference whether or not faces of 0 area are allowed.

Moreover, the above results have several interesting consequences. 
For instance, there exist plane graphs and area assignments such that no drawing 
approximates the area assignment by a constant factor $c$, i.e., there
is no drawing such that
for every inner face~$f$ with assigned 
area $A$ it holds that $\nicefrac{1}{c}\cdot A \leq \textsc{area}(f)\leq c \cdot A$.
Combined with geometric arguments, the developed technique can be used to disprove area-universality of other graphs such as the icosahedron graph~\cite{kleist1Journal} or other small triangulations~\cite{Henning,kleistPhD}.
The fact that the icosahedron is not area-universal shows that high connectivity of a graph does not imply area-universality. Furthermore, area-universality is not a minor-closed property, as every grid graph is area-universal~\cite{kleistQuad,tableCartograms,kleistPhD}, but the octahedron is not area-universal, although it is a minor of the grid.

\paragraph{Understanding Small Graphs.}
As mentioned before, understanding properties of realizing drawings of small graphs can be very useful;  in particular in order to construct gadgets for hardness proofs.
However, deciding whether a graph is area-universal is already a challenge for quite small triangulations.
Using the tools developed by Kleist~\cite{kleist1Journal,kleist3,kleistPhD} one can characterize the area-universality of triangulations with special vertex orderings on up to ten vertices. In fact, it suffices to study 4-connected triangulations.  
The most interesting and smallest graph with unknown status is the triangulation depicted in the right of \cref{fig:triang9}; this triangulation does not have one of the special vertex orderings mentioned above.
The underlying planar graph has three distinct embeddings which are depicted in \cref{fig:triang9}. While the leftmost plane graph has a non-realizable area assignment consisting of 0's and 1's (called 01-assignments for simplicity), all 01-assignments of the middle graph are realizable. Nevertheless, the middle graph is not area-universal. 
If the rightmost triangulation was area-universal, then this would show that area-universality is a property of plane rather than planar graphs.

\section{Preliminaries}
\label{sec:Prep}
In this section, we firstly introduce restricted but hard variants of \ETR and \UETR. These variants will be the base problems for our reductions. Secondly, we present a Cook-Levin analog which we then use in order to prove the containment of \AUpos and \AU  in \FER. 

\subsection{Toolbox: Hard variants of \ETR and \UETR}
\label{sec:toolbox}
Abrahamsen et al.\ showed that the 
following problem is \ER-complete~\cite{AbrahamsenAM18STOC}.

\begin{framed}
\noindent{\textbf \ETRNew} \\
\textbf{Input:} A formula over the reals of the form
$\left (\exists \; X_1,X_2,\ldots,X_n \right ) \colon\Phi,$
where $\Phi$ is a conjunction of constraints of the following form: $X=1$ (introducing a constant 1), $X+Y=Z$ (addition), $X\cdot Y = 1$ (inversion), with $X,Y,Z \in \{X_1,\ldots,X_n\}$.  Additionally, $\Phi$ is either unsatisfiable or has a solution where each variable is within~$[1/2,2]$.\\
 {\bf Question:}   Is the input formula true?  
 \end{framed}
 
In order to define some even more restricted variant of \ETRNew, we need one more definition.
Consider a formula $\Phi$ of the form $\Phi = \Phi_1 \land \Phi_2 
\land \ldots \land \Phi_m$, where each $\Phi_i$ is a quantifier-free 
formula of the 
first-order theory of the reals with variables $X_1,X_2,\ldots,X_n$, which 
uses arithmetic operators and comparisons ($=,<,\leq$) but no logic symbols.
The {\em incidence graph} of $\Phi$ is the bipartite graph with vertex set $\{X_1,X_2,\ldots,X_n\} \cup \{\Phi_1,\Phi_2,\ldots,\Phi_m\}$ that has an edge $X_i\Phi_j$  if and only if the variable $X_i$ appears in the subformula $\Phi_j$.
By \PETRNew we denote the variant of \ETRNew where the incidence graph of $\Phi$ is planar and $\Phi$ is either unsatisfiable or has a solution with all variables within $[1/2,4]$. 

Note that Lubiw, Mondal and Miltzow introduced another version of \PETRNew, which has inequality instead of equality constraints~\cite{LubiwDrawingExtension}.

\begin{theorem} \label{petr}
\PETRNew is \ER-complete.
\end{theorem}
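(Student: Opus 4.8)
The plan is to reduce from \ETRNew, which is \ER-complete by the result of Abrahamsen et al., to \PETRNew. The obvious obstacle is that the incidence graph of a given \ETRNew instance need not be planar, so the reduction must ``planarize'' it while preserving satisfiability and the boundedness of solutions. The standard trick is a \emph{variable-copying} gadget: whenever two occurrences of a variable $X$ would create a crossing in the incidence graph, introduce a fresh variable $X'$ together with the equality constraint enforcing $X = X'$, and route one occurrence through $X'$ instead. The subtlety is that \ETRNew only permits the constraints $X=1$, $X+Y=Z$, and $X\cdot Y=1$, so a raw equality $X=X'$ is not directly available; however, $X + X' = Z \wedge Z + Z' = X$ with appropriate auxiliary variables, or more simply chaining two additions through a zero-like quantity, can be arranged — alternatively one notes that $X \cdot W = 1 \wedge X' \cdot W = 1$ forces $X = X'$ using a shared inverse variable $W$, which is a gadget with a convenient planar structure (a ``claw'' at $W$). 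I would use this inversion-based equality gadget since it keeps the local structure of the incidence graph simple.

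First I would fix a planar embedding-style framework: think of the incidence graph $H$ of the \ETRNew instance drawn in the plane with some crossings, and process crossings one at a time. Each crossing involves an edge $X_i\Phi_j$ and an edge $X_k\Phi_\ell$; I split, say, the edge $X_i\Phi_j$ by inserting a degree-two ``relay'' that is realized as a fresh variable equal to $X_i$ via the gadget above, thereby rerouting locally and removing that crossing without introducing new ones (a careful ordering, e.g. processing crossings along each edge from one endpoint outward, ensures termination with a polynomial blow-up in the number of variables and constraints — at most quadratic in the original size). After all crossings are eliminated, the incidence graph of the resulting formula $\Phi'$ is planar.

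Second, I would check the two semantic requirements. Satisfiability is preserved in both directions: any solution of the original formula extends to a solution of $\Phi'$ by setting each copy variable to the value it copies (and each shared-inverse variable $W$ to the common reciprocal, which lies in $[1/2,2]$ since the copied values do), and conversely any solution of $\Phi'$ restricts to a solution of the original formula. For the bounded-solution promise: \ETRNew guarantees a solution with all variables in $[1/2,2]$; the copy variables inherit values in $[1/2,2]$ and the auxiliary inverse variables take values in $[1/2,2] \subseteq (0,5)$ as well, so $\Phi'$ has a solution with all variables in $(0,5)$ whenever it is satisfiable; and it is unsatisfiable exactly when the original is. Hence $\Phi'$ is a valid \PETRNew instance. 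Finally, membership of \PETRNew in \ER is immediate since it is a syntactic restriction of \ETRNew $\subseteq$ \ER, so \ER-completeness follows.

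The main obstacle I expect is bookkeeping the gadget so that the constraint types stay within the \ETRNew repertoire \emph{and} no gadget reintroduces a crossing: one must verify that the equality-via-shared-inverse gadget, when spliced into a planar region around a single crossing, can itself be drawn planarly in that region together with its incident edges. This is a finite case check on the local picture, but getting the routing and the ordering of crossing-removals right — so the process provably terminates and the embedding stays planar — is where the real work lies.
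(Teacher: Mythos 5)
Your overall strategy matches the paper's: reduce from \ETRNew, fix a drawing of the incidence graph, and eliminate crossings one at a time with a local gadget, then check that the bounded-solution promise transfers (your observation that the auxiliary values stay in $[1/2,2]\subseteq(0,5)$ is the same check the paper makes). However, there is a genuine gap in the heart of the argument: the crossing-removal gadget itself. You propose to ``split one of the two crossing edges by inserting a relay,'' realized as a fresh variable forced equal to $X_i$ via a shared-inverse gadget $X\cdot W=1 \wedge X'\cdot W=1$. Subdividing one edge into a path never removes a crossing with the other edge: inside a small disk around the crossing point, the two wires enter and leave at four boundary points in interleaved cyclic order, so any planar replacement must connect the two pairs of ports by paths that \emph{share a vertex}. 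Your gadget keeps the two wires vertex-disjoint (the relay path lives entirely on one wire), so it cannot be drawn planarly in that disk with the required connections; the crossing simply reappears between the other edge and one of the new path edges. This is exactly the ``finite case check on the local picture'' you defer to at the end, and it fails for the gadget you chose.

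The paper resolves this with a gadget that copies \emph{both} variables of the crossing through a shared auxiliary: it introduces $X',Y',Z$ and the three addition constraints $X+Y=Z$, $X+Y'=Z$, $X'+Y=Z$, which force $X=X'$ and $Y=Y'$. Because all three constraint vertices are incident to the common variable $Z$, the gadget is a connected planar graph whose four terminals $X,Y,X',Y'$ appear on its outer face in the cyclic order matching the four ports of the crossing, so it can be spliced into the disk, strictly decreasing the number of crossings. Your proposal could be repaired by replacing your relay with such a two-wire, shared-vertex gadget (and your shared-inverse idea does not obviously yield one: sharing $W$ between the two wires would wrongly force $X=Y$). The remaining bookkeeping in your write-up (satisfiability preservation, the $(0,5)$ bound, polynomial blow-up, and \ER-membership) is fine.
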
 
\begin{proof}
Consider an instance 
$\left (\exists \; X_1,X_2,\ldots,X_n \right )\colon \Phi$ of \ETRNew. Let $G$ be some embedding of $G(\Phi)$ in $\R^2$. Suppose that $G$ is not crossing-free and consider a pair of crossing edges. Let $X$ and $Y$ denote the variables corresponding to (one endpoint of) these edges as in~\cref{fig:crossing}. 

\begin{figure}[htb]
	\centering   \includegraphics{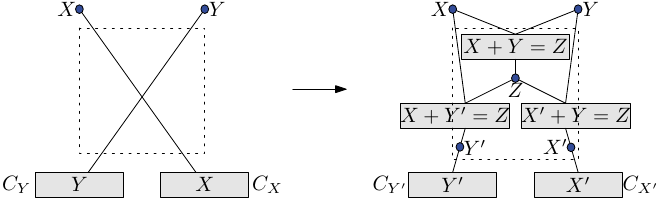}
	\caption{Eliminating crossings.}   \label{fig:crossing}
\end{figure}

We introduce three new existential variables $X',Y',Z$ and three constraints: 
 $X + Y =  Z$, $ X + Y' = Z$, and $X' + Y = Z$.
  Observe that these constraints ensure that $X = X'$ and $Y=Y'$.
 Moreover, the  embedding of $G$ can be modified so that the new 
 incidence graph $G'$ has strictly fewer crossings:
$G'$ loses the considered crossing and no new crossing is introduced.  We repeat this procedure until the incidence 
graph of the obtained formula is planar. Finally, note that 
$ 1 \leq Z = X + Y \leq 2 + 2=4 $ whenever $1/2 \leq X,Y \leq 2$, and the number of new variables and constraints is polynomial in $|\Phi|$, since the number of variables in 
 each constraint in \ETRNew is at most three. 
\end{proof}

Now we introduce a restricted variant of \UETR.

\begin{framed} 
\noindent{\textbf \UETRNew} \\
    \textbf{Input:} 
A formula  $\left (\forall \; Y_1,\ldots,Y_m \in \R^+ \right ) \left (\exists \; X_1,\ldots,X_n  \in \R^+ \right ) \colon \Phi(X,Y)$ over the reals, 
where  $\Phi$ is a conjunction of constraints of 
the form: $X=1$ (introducing constant 1), $X+Y=Z$ 
(addition), $X\cdot Y = Z$ (multiplication), 
with $X,Y,Z \in \{X_1,\ldots,X_n,Y_1,\ldots,Y_m\}$. \\
  \textbf{Question:} Is the input formula true?     
 \end{framed}

\UETRNew can be seen as a variant of \FER that is simplified in a way analogous to a \ER-complete  
variant of \ETR called \textsc{Ineq} \cite{matousekSegments,DBLP:journals/mst/SchaeferS17}. 
Similarly, we will show that \UETRNew is \FER-complete. 
\begin{restatable}{theorem}{UTRNEWTHM}\label{thm:UETRNEW}
\UETRNew is \FER-complete.
\end{restatable}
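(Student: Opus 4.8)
\textbf{Proof proposal for \Cref{thm:UETRNEW} (\UETRNew is \FER-complete).}

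The plan is to prove the two directions separately. Membership in \FER is immediate: an instance of \UETRNew is already syntactically a \UETR formula (a block of universal real quantifiers, a block of existential real quantifiers, and a quantifier-free conjunction of polynomial equations), so it reduces trivially to \UETR. The substance is \FER-hardness, for which I would reduce from \UETR itself. Given an arbitrary \UETR instance $(\forall Y)(\exists X)\colon \Psi$ with $\Psi$ quantifier-free, I would carry out the reduction in three stages: (1) normalize $\Psi$ into a Boolean combination of polynomial \emph{equations} only; (2) flatten each polynomial into a conjunction of the primitive constraints $X=1$, $X+Y=Z$, $X\cdot Y=Z$ by introducing fresh existentially quantified auxiliary variables for subterms (the standard arithmetic-circuit / Tseitin-style flattening, which is polynomial because each gate has fan-in two); and (3) eliminate the remaining logical structure (disjunctions, negations, inequalities) and restrict all quantified variables to $\R^+$.

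Stage (3) is where the real work lies, so let me sketch it. To move from $\R$ to $\R^+$, I would use the classical trick of representing each original variable $V$ (universal or existential) by a difference $V = V^+ - V^-$ of two positive variables, or alternatively by a shifted variable; since the primitive relations $+$ and $\cdot$ are not preserved verbatim under such substitutions, I would instead introduce, for each original variable, a positive ``offset copy'' and re-express $x+y=z$ and $xy=z$ through a constant large shift, using the constant-$1$ gadget together with addition to build any needed integer constant. For the universally quantified block this must be done carefully: a statement ``for all real $Y$'' becomes ``for all positive $Y^+, Y^-$'', and one checks the equivalence holds because every real is a difference of two positives. For the logical structure, I would replace $\Psi$ by a single polynomial equation using the standard encodings $P=0 \wedge Q=0 \iff P^2+Q^2=0$ and $P=0 \vee Q=0 \iff P\cdot Q=0$, and handle a strict inequality $P>0$ by introducing a fresh existential slack variable $s$ with $P\cdot s = 1$ (forcing $P\neq 0$) together with... here is the subtlety: forcing the \emph{sign} of $P$ positive, not merely $P\neq 0$, requires writing $P = t^2$ for a fresh existential $t$, i.e. $P$ is a square; combined over all atoms this expresses exactly the original quantifier-free formula. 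Negations are pushed to the leaves first so that only $=$, $\neq$, $>$, $\geq$ atoms remain, and $\neq$ / $\geq$ are then handled by the same square-and-slack devices. Throughout, every new variable is existentially quantified and appended to the $\exists$-block, so the quantifier prefix stays $\forall\exists$.

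The main obstacle I anticipate is \emph{not} the flattening but the interaction between (a) the sign restriction to $\R^+$ on the \emph{universal} variables and (b) the need to keep the reduction faithful in both directions. Naively restricting $\forall Y\in\R$ to $\forall Y\in\R^+$ changes the problem; the difference-of-positives encoding fixes this but one must verify that the flattened primitive constraints, after substituting $Y = Y^+ - Y^-$ everywhere, can still be rebuilt from $X=1$, $X+Y=Z$, $X\cdot Y=Z$ alone — which they can, since subtraction $a-b=c$ is just $b+c=a$, but it does require threading the rewriting through every gate and checking no constraint accidentally demands a non-positive value of an intermediate variable (if it does, shift that intermediate variable by a large positive constant built from the $1$-gadget). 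A secondary, more bookkeeping-heavy obstacle is confirming the polynomial size bound: each atom of $\Psi$ contributes a polynomial whose arithmetic circuit has size linear in the atom's description length, each gate contributes $O(1)$ fresh variables and constraints, and the square/slack gadgets add only $O(1)$ per atom, so the total blow-up is polynomial. With these pieces in place, the reduction maps \UETR instances to \UETRNew instances preserving truth, and combined with the trivial membership we conclude \FER-completeness.
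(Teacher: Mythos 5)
Your overall plan coincides with the paper's: membership is routine (though note the quantifiers in \UETRNew range over $\R^+$, so strictly one guards the matrix with an implication to obtain a genuine \UETR instance), and hardness goes by reduction from \UETR by (i) turning the quantifier-free matrix into a single polynomial equation, (ii) flattening it into the primitive constraints with fresh existential variables, and (iii) passing to positive variables by writing every variable as a difference of two positive variables of the same quantifier type and re-expressing the primitive constraints. The only structural difference is that the paper obtains step (i) by citing a lemma of Schaefer and \v{S}tefankovi\v{c} (which yields a degree-$4$ polynomial with coefficients of polynomial bitlength), whereas you re-derive it with the standard sum-of-squares/product/slack/square encodings; that is fine and merely inlines the citation.

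Two details in your sketch need repair. First, the ``constant large shift''/``offset copy'' variant you float for handling signs cannot work: universally quantified values (and hence the values of intermediate subterms) are unbounded below, and no fixed constant $C$ makes $\hat Z - C$ with $\hat Z>0$ range over all of $\R$; applied to universal variables this makes the reduction unsound, applied to existential ones it makes it incomplete. You must commit to the pair representation $Z = Z^+ - Z^-$ uniformly, as you do in your ``obstacle'' paragraph; then, as in the paper, the re-expressed constraints (e.g.\ addition becomes $X^+ + Y^+ + Z^- = X^- + Y^- + Z^+$, and multiplication expands into the four products $X^+Y^+, X^-Y^-, X^+Y^-, X^-Y^+$) only introduce intermediates that are sums or products of positive quantities, so no shifting is ever needed. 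Second, the integer coefficients of the input formula can have bitlength up to $|\Phi|$, so building a constant $c$ ``from the $1$-gadget together with addition'' must be done by binary expansion with doubling (as the paper does via a variable holding $2$), not by repeated addition of $1$, or the blow-up is exponential in the bitlength; with that fix your per-gate size accounting goes through.
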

The proof of this theorem relies on the toolbox by Abrahamsen and Miltzow~\cite{DynamicToolboxETRINV}.
Note that their reduction can be considered folklore.
However, Abrahamsen and Miltzow have carefully described all the details
and pointed out some important properties, which are crucial to us.
They describe several reductions, two of which are relevant to us.
In particular, they show that there is a reduction from \ETR to $\ETRAMI$, which
runs in linear time. We need the following definitions as a preparation,
see~\cite{DynamicToolboxETRINV}.

\begin{framed} 
	\noindent{\textbf \ETRAMI} \\
	\textbf{Input:} 
	A formula  
	$ \exists \; X_1,\ldots,X_n  \in \R: \Phi(X_1,\ldots,X_n)$ 
	over the reals, \\
	where  $\Phi$ is a conjunction of constraints of 
	the form: $X=1$ (introducing constant~1), $X+Y=Z$ 
	(addition), $X\cdot Y = Z$ (multiplication),
	 $Z\geq 0$ (inequality),
	with $X,Y,Z \in \{X_1,\ldots,X_n,Y_1,\ldots,Y_m\}$. \\
	\textbf{Question:} Is the input formula true?     
\end{framed}
We define \UETRAMI analogous to $\ETRAMI$,
except with universal and existential quantifiers, as
in $\UETR$.

Moreover,  given two sets $V\subseteq \R^n$ and $W\subseteq \R^m$, we say that $W$ is a
 \emph{linear extension} of $V$ if
 there is an orthogonal projection $\pi:W\longrightarrow V$
 and two vectors $a,b\in \Q^{n}$ 
 such that the mapping 
  \[x \mapsto a \odot \pi (x) + b\]
 is a continuous bijection.
 Here, $c \odot d $ 
 denotes the \emph{dot product} $(c_1d_1,\ldots,c_nd_n)$, for $c,d\in \R^n$.
 (Note that we require $a_i \neq 0$, for all $i$.)
 In this case we write $V\leqlin W$.
 
 Given a quantifier-free formula $\Phi(X_1,\ldots,X_n)$, we define the set
 \[V(\Phi) := \{ x \in \R^n : \Phi(x) \text{ is true}\}.  \]

In our context, it is important that a linear extension also
implies that the reduction can be applied to different
quantifiers. This is very useful, as it allows us to transfer reductions between \ER-hard problems
to the analogous \FER-hard problems.
\begin{lemma}
	\label{lem:Transition}
	Let 
	$\Psi = \left(\forall \, Y=(Y_1,Y_2,\ldots,Y_m)\right) 
 	\left(\exists \, X=(X_1,X_2,\ldots,X_n)\right) \colon 
 	\Phi(X,Y) ,$
 	and 
 		$\Psi' = \left(\forall \, Y=(Y_1,Y_2,\ldots,Y_m)\right) 
 	\left(\exists \, X=(X_1,X_2,\ldots,X_{n'})\right) \colon 
 	\Phi'(X,Y).$
 	If  $V(\Phi')$ is a \linearExtension of $V(\Phi)$ then 
 	$\Psi$ and $\Psi'$ have the same truth value.
\end{lemma}

\begin{proof}
	Let us assume that
	$\Psi$ is true. We have to show that $\Psi'$ is true as well.
	The reverse direction is similar. 
	Let us denote the bijection $f$ from $V(\Phi')$ to $V(\Phi)$,
	by 
	 \[f: z \mapsto a \odot \pi (z) + b,\]
	 for some $a,b\in \R^{m+n}$.
 	(Recall that $\pi$ denotes an orthogonal projection.)
	 Note that $f$ can be extended to a \emph{surjective} mapping  
	 $g: \R^{m+n'}\rightarrow \R^{m+n}$.

	 Now, let $y'\in \R^m$ be some arbitrary vector.
	 We have to show that there is a vector~$x'\in \R^{n'}$ such that
	 $\Phi'(x',y')$ is true. 
	 Let $(x^\ast,y) = g(0,y')$.
	 As we assume that $\Psi$ is true, there exists an $x$ such
	 that $\Phi(x,y)$ is true.
	 Because $f$ is a bijection, there exists  $(x',y'')\in V(\Phi')$ such that $f(x',y'') = (x,y)$.
	 Note that the projection $\pi$ is only acting on the dimension of
	 the existentially quantified variables.
	 The vectors $a$ and $b$ as in the definition of linear extension are just scaling 
	 and shifting.
	 Thus $y' = y''$.
	Consequently, it holds that $\Phi'(x',y')$ is true.
\end{proof}

\begin{theorem}[\cite{DynamicToolboxETRINV}]
 \label{thm:ETRAMI}
 For every instance $\Phi$ of \ETR, 
we can construct in $O(|\Phi|)$ time 
an instance $\Psi$ of \ETRAMI such that
  $V(\Psi)$ is a \linearExtension of $V(\Phi)$.
\end{theorem}
\begin{proof}
	This follows immediately from the proofs of Lemma~A and
	Lemma~C in~\cite{DynamicToolboxETRINV}.
	(Note that Lemma~C guarantees to preserve compactness. However, compactness is not used in any other way. Therefore, we do not need to require compactness.)
\end{proof}

\subsubsection{The proof of~\cref{thm:UETRNEW}}
We are now ready to prove~\cref{thm:UETRNEW}:
\begin{proof}[Proof of~\cref{thm:UETRNEW}.]
	Membership of \UETRNew to \FER follows from the definition of \UETR,
	as \UETRNew is a special case of \UETR.
	
	In order to show hardness, we reduce \UETR to \UETRNew.
	By~\cref{thm:ETRAMI} and~\cref{lem:Transition},
	it is sufficient to reduce from \UETRAMI.
	The two algorithmic problems \UETRAMI and \UETRNew
	are different in two aspects.
	Firstly, \UETRNew has no constraint of the form $X\geq 0$ .
	Secondly, \UETRNew has only positive variables.
	
	\paragraph*{Removing Inequalities.} 
	Let $X\geq 0$ be some inequality-constraint.
	It is obvious that $X$ is existentially quantified,
	as the formula otherwise, is trivially false.
	For every inequality, we do the following standard trick.
	We add another existentially quantified variable $V$ and 
	replace the old constraint by the new constraint $X=V^2$.
	It is obvious that this transformation can be done
	in linear time and is a proper reduction.
	
	\paragraph*{Changing ranges of quantifiers.}
	Next, we want to exchange quantifiers ranging over all reals with quantifiers ranging over positive reals.
	For each variable $Z$, we introduce two positive variables $Z^+$ and $Z^-$.
	If $Z$ is universally quantified, then so are both $Z^+$ and $Z^-$; analogously in the case if $Z$ is existentially quantified.
	Every appearance of~$Z$ is substituted by $(Z^+ - Z^-)$.
	It is easy to observe that the constructed formula is equivalent. 
	However, the form of the constraints, as
	defined above, is lost. 
	
	\paragraph*{Restoring the form of constraints.}
	Now, we want to restore the constraints. 
	We introduce new variables and, by following Abrahamsen and Miltzow~\cite{DynamicToolboxETRINV}, denote them 
	by multi-character symbols.
	For example, in order to introduce the square of an already established variable~$X$, we represent the new variable by $\const{X^2}$. The value of  $\const{X^2}$ will be forced by appropriate constraints.

	We introduce the existential variable $\const{1}$ with the constraint $\const{1} = 1$.
	Then, every constraint is transformed in the following way:
	A constraint to introduce a constant $(Z^+-Z^-)=1$ is transformed into $Z^+ = Z^- + \const{1}$.
	
	An addition constraint $(X^+-X^-) + (Y^+-Y^-)=(Z^+-Z^-)$ is equivalent to the following expression: $X^+ + Y^+ + Z^- = X^- + Y^- + Z^+$. We introduce new positive, existentially quantified variables and the constraints:
	\begin{align*}
	X^+ + Y^+ =& \const{X^+ + Y^+} \\
	X^- + Y^- =& \const{X^- + Y^-} \\
	\const{X^- + Y^-} + Z^+ =& \const{X^- + Y^- + Z^+} \\
	\const{X^+ + Y^+} + Z^- =& \const{X^- + Y^- + Z^+}.
	\end{align*}
	A multiplication constraint $(X^+-X^-) \cdot (Y^+-Y^-)=(Z^+-Z^-)$ is equivalent to the  expression $X^+Y^+ + X^-Y^- + Z^- = X^+Y^- + X^-Y^+ + Z^+$. We introduce new positive, existentially quantified variables and the constraints for each pair $\circ,\times\in\{+,-\}$
	\begin{align*}
	X^\circ \cdot Y^\times =& \const{X^\circ Y^\times} 
	\end{align*}
	as well as the constraints:
	\begin{align*}
	\const{X^+Y^+} + \const{X^-Y^-} =& \const{X^+Y^+ + X^-Y^-} \\
	\const{X^+Y^-} + \const{X^-Y^+} =& \const{X^+Y^- + X^-Y^+}\\
	\const{X^+Y^-+X^-Y^+}+Z^+	=& \const{X^+Y^- + X^-Y^+ + Z^+}  \\
	\const{X^+Y^- + X^-Y^+} + Z^- =& \const{X^+Y^- + X^-Y^+ + Z^+}.
	\end{align*}
	
	Note that now all constraints are of desired forms and all variables 
	can be assumed to be strictly positive.
	Specifically, newly introduced variables are the sum or
	product of other variables.
	Furthermore, the last three steps can all be done in linear time.
\end{proof}

\subsection{A Cook-Levin-type Theorem}
\label{sec:Cook-Levin}

Complexity classes are defined in various ways.
Some complexity classes are defined by an algorithmic
problem and a notion of equivalence or reduction; other complexity classes are described in terms of
models of computation. 
Most famously, the complexity class \NP is originally defined
in terms of a non-deterministic Turing machine.
Namely, the complexity class \NP consists of all problems that
can be solved in polynomial time on a non-deterministic Turing machine.
Alternatively, \NP can be defined as the set of all problems that reduce to boolean satisfiability in
polynomial time.
Cook and Levin independently showed that the two definitions are equivalent~\cite{Cook,levin1973universal}.
Boolean satisfiability plays a crucial role in showing \NP-hardness.
On the other hand, the formulation in terms of non-deterministic Turing machines
plays a crucial role to show \NP-membership.

The history of \ER is reversed. 
The complexity-class~\ER was defined in terms of the algorithmic
problem \ETR, which is the real analog to boolean satisfiability.
Recently, Erickson, van der Hoog, and Miltzow~\cite{RobustComputation}
gave a new definition of the complexity class~\ER in
terms of a model of computation, which is analogous to non-deterministic Turing machines.
They showed that the two definitions are equivalent.
In this way their result can be seen as a  Cook-Levin-type theorem for \ER.

In this section, we prove the Cook-Levin-type result for \FER yielding an alternative way to establish \FER-membership.
Roughly speaking, we show that \UETR-formulas and 
\emph{real challenge-verification} algorithms are equally powerful.
Simultaneously, this gives an alternative perspective
and definition on \FER. 

Our proof strongly relies on the result by Erickson, van der Hoog, and Miltzow~\cite{RobustComputation}.
To describe our results, we  follow their terminology. 
Their definition of the real-RAM corresponds to the intuitive definition most researchers use.
The real-RAM consists of real registers, word registers, 
a program counter and a central processing unit (CPU).
An algorithm is a list of supported instructions,
see~\cite{RobustComputation} for details. 
Inputs are denoted by $(x,y) \in \R^{n}\times \Z^{m}$.
A \emph{discrete decision problem} is a function $Q$ 
from arbitrary integer vectors to the booleans 
$\{\textsc{True},\allowbreak \textsc{False}\}$ 
(or equivalently, any language over the alphabet $\{0,1\}$).  
An integer vector $I$ is a \emph{yes-instance} of $Q$ 
if $Q(I) = \textsc{True}$ and 
a \emph{no-instance} of $Q$ if $Q(I) = \textsc{False}$.  Let $\circ$ denote the concatenation operator.
A \emph{real challenge-verification  algorithm} (CV algorithm) for $Q$ is a real-RAM
program~$A$ that satisfies the following conditions, for some constant~$c\ge 1$:
\begin{itemize}[noitemsep]
	\item
	$A$ halts after at most $N^c$ time steps, using word size $\lceil c\log_2 N \rceil$, given any input of total length~$N$.
	
	\item
	For every yes-instance $I\in\Z^n$, 
	and for every real vector $y$,
	there is a real vector~$x$ and an integer vector $z$, 
	each of length at most $n^c$, 
	such that $A$ \textsf{accept}s input $(y \circ x, I\circ z)$.
	
	\item
	For every no-instance $I$, 
	there is a real vector $y$
	such that for every real vector $x$ and an integer vector $z$, 
	each of length at most $n^c$, 
	$A$ \textsf{reject}s input $(y \circ x, I\circ z)$.
\end{itemize}
Note that we assume that the algorithm $A$ knows, which parts 
belong to $x$, $y$, $I$, and $z$ due to their length.

There exists a clear analogy to \UETR-formulas.
The instance $I$ represents the non-quantified part
of the formula, $y$ represents values of universally
quantified variables and $x$ represents existentially 
quantified variables. 
We call $y$ the \emph{challenge} and $x$ the \emph{witness}.
We are now ready to state the main theorem of the section.	
\begin{theorem}[Cook-Levin Analog]\label{thm:cookLevin}
	For any discrete decision problem $Q$ there is a real challenge-verification algorithm
	if and only if $Q\in \FER$.
\end{theorem}
\begin{proof}[Sketch of proof.]
	We have to show how an \UETR-formula can be transformed into
	an instance $I$ and a CV algorithm $A$ and vice versa. 
	Given an \UETR-formula~$\varphi$, it is very easy to simulate
	it using a CV algorithm, by merely evaluating the formula.
	The challenge and witness represent the universal and existential
	variables respectively.
	
	Given an instance $I$ and a CV algorithm $A$, we
	construct an equivalent \UETR-formula~$\varphi$.
	Interestingly, this can be done in the same way as 
	in Theorem~$10$ of Erickson, van der Hoog and Miltzow~\cite{RobustComputation}.
	To explain this, first note that the proof idea of the mentioned
	Theorem~10~\cite{RobustComputation} is to simulate every 
	execution step
	of the algorithm by an \ETR-formula. 
	However, the simulation does not mind
	how the variables are quantified. Thus, even if some of the 
	variables are universally quantified, the simulation is
	still identical.
\end{proof}

In the next section, we use \cref{thm:cookLevin} to prove the containment of area-universality in \FER (\Cref{pro:Containment}).
By the above result, this proof does not require to construct a formula, but solely
a real challenge-verification algorithm.
This allows to build on existing 
algorithms without reformulating them into the language of \FER-formulas.

\subsection{Containment of \AU in \FER}
\label{sec:auinfer}

In this section, we present a proof of \cref{pro:Containment}.

\AreaUniInFER*

\begin{proof}
	By \cref{thm:cookLevin}, it suffices to show that there exists a real challenge-veri\-fi\-ca\-tion algorithm for testing both \AUpos and \AU.
	Given a plane graph $G$, we encode it by its rotation systems (that is the cyclic order of incident edges at each vertex) and its outer face, i.e., by a vector of words.
	This defines our input $I$.
	Universally quantified areas define the challenge $y$.
	The witness $d$ consists of the vertex coordinates of a  realizing straight-line drawing~$D$ of~$G$.
	The algorithm checks that the drawing $D$ is a realizing drawing. \medskip
	
	For \AUpos this is straight-forward. In particular, we ensure that $D$ is 
		\begin{enumerate}
		[label=(\roman*),topsep=1pt,itemsep=-1pt,partopsep=1pt,parsep=1pt,
		leftmargin=0.8cm]
		\item planar,
		\item equivalent to $G$, and
		\item
		realizes the face areas defined in $y$.
	\end{enumerate}

In order to test planarity, we check for every pair of disjoint edges, whether the intersection of their segments is empty. 
To test the equivalence, we check if the rotation system of $G$ and $D$ coincide, i.e., if the cyclic order of incident edges are equal. Here we use the fact that the rotation system uniquely defines the 2-cell embedding of a planar graph~\cite[Theorem 3.2.4.]{mohar}.

In order to check the face areas, we exploit the fact that after checking planarity, we know that  every face corresponds to a simple polygon; in the presence of cut-vertices, the polygon is degenerate.
The area of a  (possibly degenerate) polygon can be computed with the {\em shoelace formula} which first described by 
Meister~\cite{meister1769generalia}. 
Denote the vertex coordinates of 
a simple polygon~\P in counterclockwise order by $(x_1,y_1),\ldots,(x_n,y_n)$. 
Then the area~$A$ of \P is given by
\begin{align}
2\cdot A(\P) =
\sum_{i=2}^{n-1} 
\det \begin{pmatrix}
x_1 & x_i & x_{i+1} \\
y_1 & y_i & y_{i+1} \\
1 & 1 & 1 \label{form:shoelace}
\end{pmatrix}.
\end{align}

	Proving containment of \AU is more intricate. We check that $D$
	\begin{enumerate}
		[label=(\roman*),topsep=1pt,itemsep=-1pt,partopsep=1pt,parsep=1pt,
		leftmargin=0.8cm]
		\item \label{itm:Crossing-Free}
		is crossing-free,
		\item \label{itm:AREAS}
		realizes the face areas defined in $y$, and
		\item \label{itm:totalArea} the complement of the outer face has the correct total area, namely the sum of the areas in $y$.
	\end{enumerate}
Furthermore, we show that the properties  \ref{itm:Crossing-Free},  \ref{itm:AREAS}, and \ref{itm:totalArea}  imply that $D$ is equivalent to~$G$, i.e., the set of inner faces coincide.

We show how the three steps can be computed with a real-RAM algorithm.
For \ref{itm:Crossing-Free}, the difficulty arises from the fact that we allow for degenerate drawings.  To check this property, we exploit a result of Atikaya, Fulek, and Tóth~\cite{weekEmbeddings};  restated in the terminology of this paper it reads as follows.

\begin{claim*}[\cite{weekEmbeddings}, Corollary 1.3]
	Given an abstract graph with $n$ vertices and a vector of vertex coordinates $d$,
	we can decide in $O(n^2\log n)$ time whether the  straight-line drawing  induced by $d$ is crossing-free.
\end{claim*}
	
For the remainder, we exploit the fact that after checking \ref{itm:Crossing-Free}, we know that  every face corresponds to a (possibly degenerate) simple polygon.
Similarly to the above, we check properties~\ref{itm:AREAS} and  \ref{itm:totalArea} by computing the area of a  (possibly degenerate) polygon with \cref{form:shoelace}.

It remains to show  how \ref{itm:Crossing-Free},  \ref{itm:AREAS}, and \ref{itm:totalArea}   imply that $D$ and $G$ have the same set of faces, i.e., that $D$ is a crossing-free drawing of~$G$.
Observe that we verify~\ref{itm:AREAS}, using \cref{form:shoelace}.
In other words, we ensure that the \emph{signed areas} are correct. 
In particular, this implies that the \emph{orientation} of each face $D$ is correct, i.e., when walking in counter-clockwise direction on the boundary of a face in $D$, the bounded part lies on the left side. 
If the bounded part were to lie on the right side, the determinant in \cref{form:shoelace} would have a negative sign.

For each inner face $f$ of $G$, we consider the corresponding face polygon in $D$ defined by the bounded region of the face cycle.
Due to the correct orientation of all face polygons, every inner edge of~$G$ is covered from both sides in~$D$. 
Moreover, every outer edge of~$G$ is covered from the inside in~$D$. In particular, the region obtained by the union of all face polygons has no holes and is bounded by the outer edges of $G$ forming a closed curve. 
We claim that the face polygons are interior-disjoint; 
otherwise some parts of the plane are covered twice -- a contradiction to the fact that the area of the face polygons sums to the area of the complement of the outer face.
Because the face polygons are interior-disjoint,  they are faces of~$D$ and hence, the drawing $D$ is a crossing-free drawing for $G$.
This completes the proof.
\end{proof}

\section{Hardness of \PA}\label{sec:PA}
In this section, we prove the following theorem.
\triples*

\begin{proof}
The membership follows from the fact that we can easily express the area of a triangle by a polynomial equation, see \cref{form:shoelace} in \cref{sec:auinfer}.
So an instance of \PA can be expressed as an \UETR-formula, which is a conjunction of equations of the above form for each triple. 

For the hardness, we reduce from $\UETRNew$. For every instance $\Psi$ of \UETRNew, we give a set of points $V$ and unordered triples~$T$, along with a partial area assignment $\cal A'$.
Let $\Psi$ be a formula of the form: 
\[\Psi = (\forall Y_1, \ldots,Y_m \in \R^+ )(\exists X_1,\ldots,X_n \in \R^+) \colon \Phi(Y_1, \ldots,Y_m,X_1,\ldots,X_n).\]
Recall that $\Phi$ is a conjunction of constraints of the form $ X = 1$, $X+Y = Z$, and $X \cdot Y = Z$.
First, we show how to express $\Phi$.   
Our gadgets are similar to the \emph{von Staudt constructions}, used for showing $\ER$-hardness of \textsc{Order Type} (see Mn\"ev~\cite{Mnev}  or Richter-Gebert~\cite{Richter-Gebert95}).
All variables are represented by points on one line which we denote by $\ell$ for the rest of the proof.
First, we enforce points to be on~$\ell$. Afterwards, we construct gadgets for mimicking addition and multiplication. 
Finally, we describe how to represent universally quantified variables.

We introduce three points $p_0$, $p_1$, and $r$ and define $\A'(p_0,p_1,r):=1$. 
The positive area ensures that the points are not collinear and pairwise different.

Denoting a line through two points $a$ and $b
$ by $\ell_{a,b}$, 
we set $\ell := \ell_{p_0,p_1}$.
The points on $\ell$ will correspond to real numbers, where $p_0$ and $p_1$ are interpreted as 0 and 1, respectively.
Each variable~$X$ is represented by a point $x$ on $\ell$.
Additionally, since all variables $X$ are non-zero, we introduce a triangle forcing~$x$ to be different from $p_0$. 
In general, we can ensure that two points 
$x_1,x_2$ are distinct, by introducing a point 
$q$ and adding a triangle $(x_1,x_2,q)$ with $\A'(x_1,x_2,q):=1$.
The absolute value of $X$ is defined by $\|p_0x\|$; if $x$ and $p_1$ lie on the same side of $p_0$, 
then the value of $X$ is positive, otherwise it is negative. Here, we allow negative values, but later we force the original variables to be positive.

A simple way to force a point $x$ on $\ell$, is to set
$\A'(x,p_0,p_1):=0$. However, as we want to avoid prescribing area 0, we introduce a more sophisticated way.
As an important building block, 
we show how to enforce on four pairwise 
different points $t,t',b,b'$ that the segment ${tt'}$ is 
parallel to the segment ${bb'}$ and has half its length, see \cref{fig:TrapezoidA}; besides these two properties no additional constraint on any of the four points is imposed.
Thus, we define 
$\A'(t,t',b)=\A'(t,t',b')=1$ and 
$\A'(b,b',t)=\A'(b,b',t')=2$, see \cref{fig:TrapezoidB}.

\begin{figure}[htb]
	\centering
	\begin{subfigure}[b]{.4\textwidth}
		\centering
		\includegraphics[page=11]{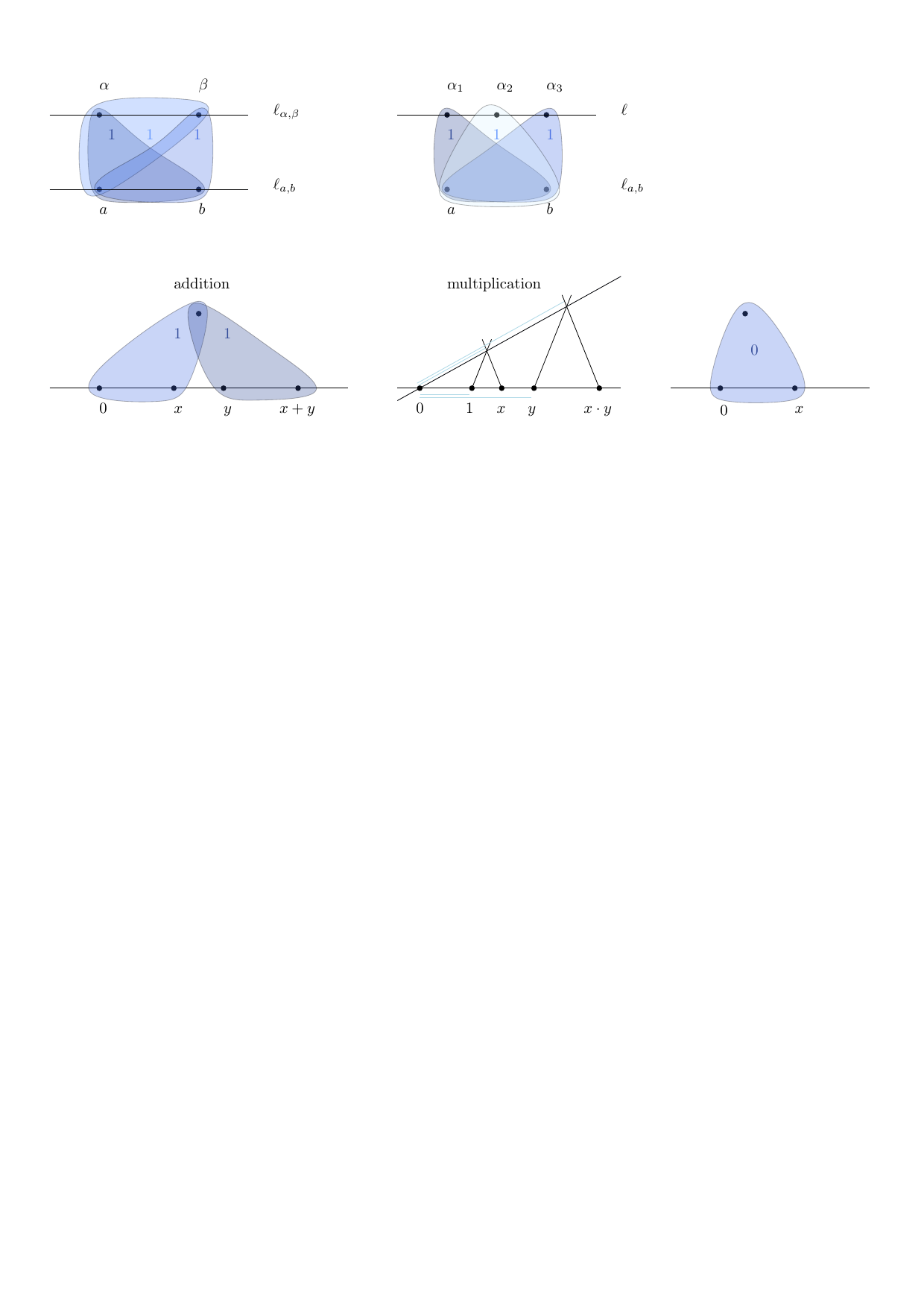}
		\caption{} 
		\label{fig:TrapezoidA}
	\end{subfigure}\hfil
	\begin{subfigure}[b]{.4\textwidth}
		\centering
		\includegraphics[page=12]{parallelLines}
		\caption{} 
		\label{fig:TrapezoidB}
	\end{subfigure}
	\caption{The gadget to force parallel segments. }
	\label{fig:Trapezoid1}
\end{figure}

We show that $b$ and $b'$ lie on the same side of the line $\ell_{t,t'}$:
Suppose for  contradiction that $\ell_{t,t'}$ separates $b$ and $b'$. 
If, additionally, $t$ and $t'$ are on the same side of $\ell_{b,b'}$ then the triangle $(b,b',t)$ is contained in or contains 
the triangle $(b,b',t')$, see \cref{fig:TrapezoidC}. 
However, both triangles have the same area and $t \neq t'$, which is a contradiction.
Consequently,  $\ell_{b,b'}$ separates $t$ and $t'$ and the quadrangle $tbt'b'$ can be partitioned by either diagonal $bb'$ or $tt'$ as illustrated in \cref{fig:TrapezoidD}. 
Thus,
$2=\mathcal A (t,t',b)+
\mathcal A (t,t',b')=\mathcal A (b,b',t')
+\mathcal A (b,b',t)=4$, which is again a contradiction. Therefore  $b$ and $b'$ lie on the same side of $\ell_{t,t'}$.

\begin{figure}[htb]
	\centering
	\begin{subfigure}[b]{.45\textwidth}
		\centering
		\includegraphics[page=13]{parallelLines}
		\caption{Case 1: $t,t'$ are on the same side of $\ell_{b,b'}$.} 
		\label{fig:TrapezoidC}
	\end{subfigure}\hfil
	\begin{subfigure}[b]{.45\textwidth}
		\centering
		\includegraphics[page=14]{parallelLines}
		\caption{Case 2: $t,t'$ are separated by $\ell_{b,b'}$.} 
		\label{fig:TrapezoidD}
	\end{subfigure}
	\caption{The situation that $\ell_{t,t'}$ separates $b$ and $b'$.}
	\label{fig:Trapezoid2}
\end{figure}

By the prescribed area, $b$ and $b'$ have the same distance to $\ell_{t,t'}$, i.e., the segments $tt'$ and $bb'$ 
are parallel, and $bb'$ has twice the length of $tt'$. Moreover, no further constraints are imposed on $t,t',b,b'$.

We use this building block in two ways. 
First, the gadget is used to force points on a specific line defined by two distinct points. Note that a point $x$ with $\A'(t,t',x):=\nicefrac{1}{2}$, $\A'(b,b',x):=1$ is forced to lie on the line cutting the trapezoid at half height as illustrated in \cref{fig:Trapezoid1}. 
In order to force points on the line $\ell$, we use a copy of this building block as follows: Introduce four new distinct points $c,c',d,d'$. Construct a trapezoid with parallel segments $cc'$ and $dd'$, where the length of the second one doubles the length of the first one.
Let $P$ denote a set of points to be enforced on $\ell$, then we define for every $x\in P\cup\{p_0,p_1\}$ the areas
$\A'(c,c',x):=\nicefrac{1}{2}$ and $\A'(d,d',x):=1$. This introduces no other constraints on the position of~$x\in P$. Likewise, we can introduce points on any given line defined by two distinct points. We refer to this as the \emph{line construction}.

The second way to use the trapezoid construction is to enforce two lines to be parallel. For this, we first introduce new and distinct points on the two lines, as described in the previous paragraph.
Afterwards, we build the trapezoid gadget on the four points. We refer to this as the \emph{parallel-line construction}.

Now, we describe the \emph{addition gadget} for a constraint $X+Y=Z$. Let $x,y,z$ be the points encoding the values of $X,Y,Z$, respectively. Recall that  $x,y,z\in\ell$ and  $x,y,z\neq p_0$. 
We introduce a point $q_1$ and prescribe the areas 
$\A'(p_0,x,q_1)=\A'(y,z,q_1)=1$, see \cref{fig:parallelLines2}.
Because the two triangles have the same height and same area, it holds that $\|yz\|=\|p_0x\|$. Thus, the value of $Z$ 
is either $X+Y$ or $Y-X$. Analogously, we introduce a point $q_2$ and define 
$\A'(p_0,y,q_2)=\A'(x,z,q_2)=1$, implying that $Z$ is either $Y+X$ or $X-Y$. 
Therefore either $Z=X+Y$ (the intended solution) or $Z=X-Y=Y-X$. 
The latter case implies $X=Y$ and thus $Z=0$. This 
contradicts the fact that $z\neq p_0$.

\begin{figure}[htb]
	\centering
	
	\begin{subfigure}[b]{.4\textwidth}
		\centering
		\includegraphics[page=3]{parallelLines}
		\caption{The addition gadget.}
		\label{fig:parallelLines2}
	\end{subfigure}\hfil
	\begin{subfigure}[b]{.4\textwidth}
		\centering
		\includegraphics[page=10]{parallelLines}
		\caption{The multiplication gadget.}
		\label{fig:mult}
	\end{subfigure}
	\caption{The constraint gadgets.}
	\label{fig:gadget}
\end{figure}

To construct a \emph{multiplication gadget} for a
constraint $X \cdot Y = Z$, let $x,y,z(\neq p_0)$
 be the points encoding the values of $X,Y,Z$, respectively.
We introduce two distinct points $p,p'$ whose supporting line contains $p_0$, i.e., $p_0,p,p'$ lie on a common line by the line-construction.
Furthermore, we make sure that $p$ and $p'$ are not on $\ell$.
To do so, we introduce two new points $q,q'$, force them to lie on $\ell$ by the line construction, and prescribe the areas $\A'(p_0,p,q):=1$ and $\A'(p_0,p',q'):=1$. Note that this does not introduce any additional constraints on the positions of $p$ and $p'$.
By the parallel-line construction, we enforce the line $\ell_{p_1,p}$ to be parallel to $\ell_{y,p'}$ and the line $\ell_{x,p}$ to be parallel to $\ell_{z,p'}$,
see~\cref{fig:mult}. 
The intercept theorem, which is also known as Thales' theorem or the basic proportionality theorem~\cite{wikiIntercept}, ensures that the following ratios 
coincide:
$$|p_0p|/|p_0p'|=|p_0p_1|/|p_0y|=|p_0x|/|p_0z|.$$ 
This also holds for negative values.
By  definition of $x,y,z$, we obtain $1/Y = X/Z$, and hence $X \cdot Y=Z$.
Recall that $p_1 = 1$.

For every universally quantified variable $Y_i$, $i\in\{1,\dots, m\}$, let $y_i$ be the point encoding 
the value of $Y_i$ with $y_i\in \ell$ and
$y_i \neq p_0$. We introduce a triple 
$f_i = (p_0,r,y_i)$, whose area is 
universally quantified. Recall that $r$ 
is a point with $\A'(p_0,p_1,r)=1$. 
To enforce each original variable $X$ to be positive, we add an existentially quantified variable~$S_X$ and 
the constraint $X = S_X \cdot S_X$ where $S_X$ may or may not be positive. 
This finishes  the polynomial time reduction. 

It remains to argue that $\Psi$ is true if and only if for our constructed instance of \PA, where $\A'$ is the partial assignment of areas, every assignment $\A$ consistent with $\A'$ is realizable.
Suppose $\Psi$ is true, let $\A'$ be as above, and consider an assignment $\A$ that is consistent with $\A'$.
Let $V(Y_i)$ be the values assigned to the triples $f_i$, and let $V(X_i)$ be the values of the variables $X_i$ in some satisfying assignment for $\Phi$.
Let $y_1,\dots,y_m,x_1,\ldots,x_n$ be points positioned on a line at distances from a point $p_0$ corresponding to these values.  Since each addition and multiplication relation specified by $\Phi$ holds, the corresponding  gadgets can be realized, so \A is realizable.
Suppose now that every assignment $\A$ that is consistent with $\A'$ is realizable, and consider values $V(Y_1),\ldots,V(Y_m) \in \R^+$ of the universally quantified variables of $\Psi$.
Then, there is a realization of $\A$ where each triple $f_i$ has area $V(Y_i)$, in this realization $p_0 \neq p_1$, and $V(X_i) = \|x_i -p_0\|/\|p_1-p_0\|$ is a satisfying assignment for $\Phi$.  Thus $\Psi$ is true.

Finally, let us point out that for each variable and for each gadget we introduced a constant number of vertices and a constant number of triangles, so the number of triangles is linear in the number of vertices.
\end{proof}

Let us point out that if we start the reduction in \cref{thm:hyper} from a variant of \UETRNew, where universally quantified variables are non-negative (instead of positive), we can obtain the hardness of a variant of \PA, where we allow degenerate triangles (i.e., of area 0).
Establishing \FER-hardness of such a variant of \UETRNew can be done in a way analogous to \cref{thm:UETRNEW}.

We conclude this section with a small discussion of how to strengthen this result. 
An interesting step towards establishing our conjecture would be to prove an analog of \cref{thm:hyper}, where the realizing point set and the triples induce a plane triangulation.
However, the construction in our proof of \cref{thm:hyper} is highly non-planar, and obtaining such a result seems to require a very different approach.

\section{Hardness of \PPA}\label{sec:PPA}
In this section we prove \cref{thm:PPA} by presenting 
a reduction from \PETRNew. 

\PPAthm*
\begin{proof}
Let $\Psi = \exists X_1\ldots X_n:\Phi(X_1,\ldots,X_n)$ be an instance of \PETRNew.
Recall that we can assume that if $\Psi$ is a 
{\textsc yes}-instance, then it has a solution, 
in which the values of variables are in the 
interval $(0,\lambda)$ with $\lambda = 5$. 
We construct a plane graph $G_\Psi=(V,E)$, a (positive)
area assignment $\mathcal A$ of inner 
faces of $G_\Psi$, and fixed positions of a 
subset of vertices, such that $G_\Psi$ has 
a realizing drawing respecting pre-drawn vertices 
if and only if $\Phi$ is satisfiable by real values 
from the interval $(0,\lambda)$. 
  
Consider the incidence graph of $\Phi$.
It is straightforward to verify that a planar drawing of this graph can be modified to an embedding in which 
\begin{itemize}
	[topsep=1pt,itemsep=-1pt,partopsep=1pt,parsep=1pt,
	leftmargin=0.8cm]
\item each edge is represented by a polyline consisting of vertical and horizontal segments,
\item the polylines of edges sharing a vertex might partially overlap -- the common part, called a bundle, is a polyline that contains the common vertex,
\item each vertex representing a variable is incident to one bundle of edges.
\end{itemize}
 \autoref{fig:incidenceGraph} depicts an example of such an embedding.
We will construct the instance of \PPA, based on the formula $\Psi$ and its incidence graph.

\begin{figure}[hbt]
	\centering
	\includegraphics{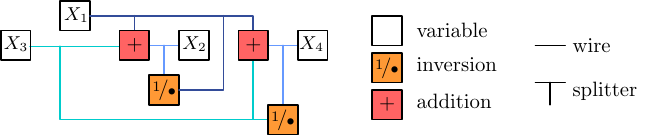}
	\caption{The embedding of the incidence graph of the formula~$\Psi = (X_1 + X_2 = X_3)\land (X_1 \cdot X_2 =1)\land(X_1 +  X_4 = X_3)\land (X_4 \cdot X_3=1)$.}
	\label{fig:incidenceGraph}
\end{figure}

We design several types of gadgets: 
{\em variable gadgets} representing variables, 
as well as {\em inversion} and {\em addition gadgets}, 
realizing the corresponding constraints. Moreover, 
we construct {\em wires} and {\em splitters} in 
order to copy and transport information.
Some vertices in our gadgets will have prescribed positions. 
We call such vertices \emph{fixed} and we call all 
other vertices  \emph{flexible}.

The \emph{variable gadget} for a variable $X$ consists of four fixed vertices $a$, $b$, $c$, $d$ and one flexible $v_x$, see the left image of \autoref{fig:var}. The fixed vertices $a,b,c,d$ form a square of area $1$.
Observe that the prescribed area of the face $a,d,v_x,c,b$ and planarity constraint force $v_x$ to be on the segment $cd$. 
Now the length of the segment $cv_x$ specifies the value of $X$, divided by the scaling factor $\lambda$. We symbolized this segment by a bold gray line in \autoref{fig:var}.

\begin{figure}[hb]
	\centering
	\begin{subfigure}[b]{.4\textwidth}
		\centering
		\includegraphics[page=2]{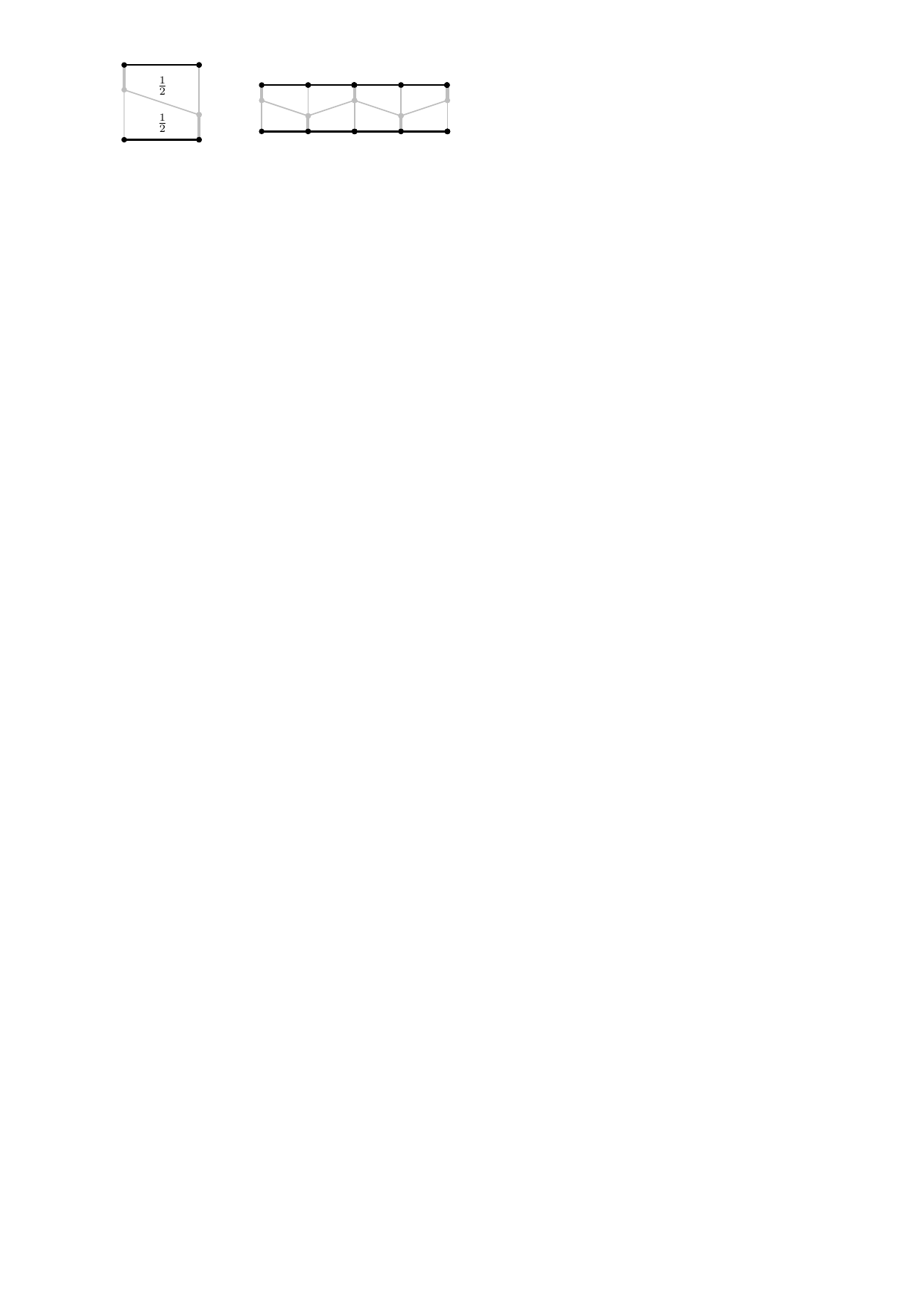}
		\caption{} 
		\label{fig:var}
	\end{subfigure}\hfill
	\begin{subfigure}[b]{.3\textwidth}
		\centering
		\includegraphics[page=4]{EdgePlanar}
		\caption{} 
		\label{fig:wire}
	\end{subfigure}\hfill
	\begin{subfigure}[b]{.2\textwidth}
		\centering
		\includegraphics[page=5]{EdgePlanar}
	\end{subfigure}
	\caption{(a) The variable gadget with an attached wire and (b) a wire gadget.}
\end{figure}

Now, we introduce a constant 1 with a constraint $X=1$ by using a variable gadget. Instead of making vertex $v_x$ free, we make it fixed and place it in such a way that the distance from $c$ to $v_x$ is $\nicefrac{1}{\lambda}$.
We will sometimes assume that a flexible vertex is forced to lie on a specified segment because this property can be induced by a variable gadget.

The \emph{wire gadget} 
consists of several box-like fragments: four fixed vertices positioned as the corners of an axis-parallel unit square  and two opposite fixed edges, see \autoref{fig:wire}. Each of the other two sides of the square is subdivided by a flexible vertex, these two vertices are joined by an edge. Each of the two quadrangular faces has a prescribed area of $\nicefrac{1}{2}$.
  
Note that if one of the flexible vertices is collinear with its fixed neighbors, so is the other one. The crossing-freeness constraint ensures that each flexible vertex lies between the corresponding fixed vertices, as otherwise the edge that joins them would cross one of the edges on the boundary of the gadget.
Moreover, in a sequence of such squares, the segment representing the value of the variable alternates between top and bottom. Hence, if necessary, we may use an odd number of fragments in order to invert the side where the value is represented. 
Wires are used to connect variable gadgets to inversion and addition gadgets, see \autoref{fig:incidenceGraph}.
Let us point out that a variable gadget may be connected to other gadgets only using a wire.
This ensures that the property that we stated when introducing a variable gadget: the flexible vertex does not lie outside the square forming the gadget, see \autoref{fig:var}.

If one variable participates in several constraints, we need to be able to split the wire, in order to provide a connection to all necessary gadgets. The \emph{splitter gadget} contains a central fixed square of area 1, where each side is adjacent to a triangle of area $\nicefrac{1}{2}$, see \autoref{fig:split}.
   \begin{figure}[htb] 
   	\centering
   	\begin{subfigure}{.4\textwidth}
   		\centering
   		\includegraphics[page=6]{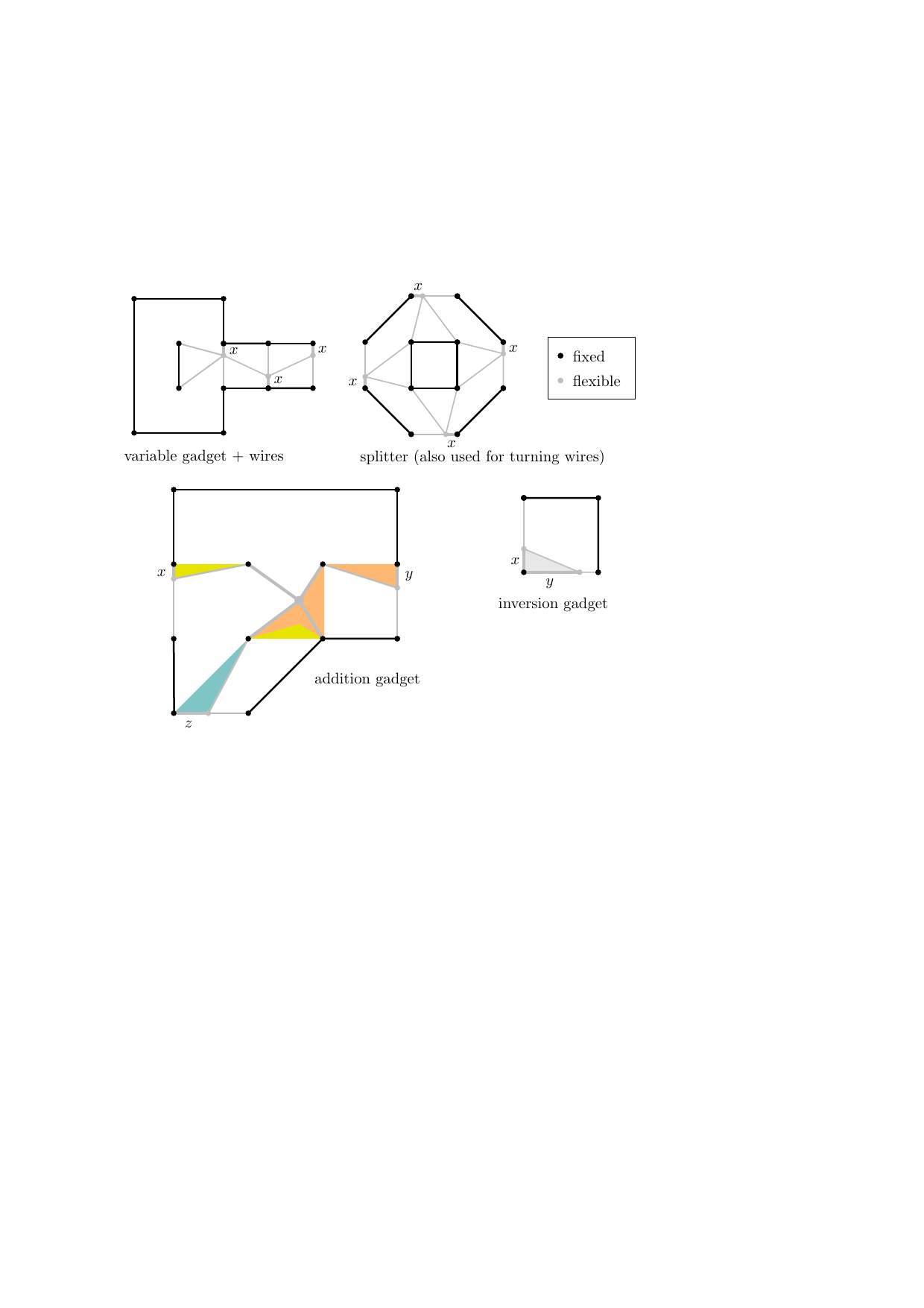}
   		\caption{}
   		\label{fig:split}
   	\end{subfigure}   
   	\hfil
   	\begin{subfigure}{.4\textwidth}
   		\centering
   		\includegraphics[page=7]{PlanarAll2}
   		\caption{}
   		\label{fig:inv}
   	\end{subfigure}   
   	\caption{(a) The splitter gadget and (b) the inversion gadget.}
   \end{figure}
   Each triangle fixes a flexible vertex on a line. These flexible vertices and their neighbors on the boundary of the splitter gadget are identified with the appropriate vertices in a a wire; this is how we connect the splitter with other gadgets.
   Observe that the value of one variable fixes the values of all variables. Indeed, since the non-fixed faces of the gadget are arranged in a circular way, it is straightforward to verify that all triangular faces must have exactly the same shape (up to translation and rotation).
   Note that each face  area $1$ has exactly two flexible vertices which are forced to lie on specific segments. Thus if one of them is determined, clearly the other is also uniquely determined in a realizing drawing.  Note that we may use the splitter gadget not only for splitting wires, but also for realizing turns. 

Now let us discuss gadgets for inversion and addition constraints.
First consider an inversion constraint $X \cdot Y = 1$.
The \emph{inversion gadget} consists of four fixed vertices $a,b,c,d$, two flexible vertices $v_x$ and $v_y$, and the edge $v_xv_y$, see~\cref{fig:inv}. The fixed vertices belong to a unit square. By linking the inversion gadget with wires, we can ensure that $v_x$ belongs to the segment $ad$ and $v_y$ belongs to the segment $cd$.
Let $x$ be the distance from $d$ to $v_x$, and $y$ be the distance of $d$ to $v_y$.

Suppose that these distances represent, respectively, the values of variables $X$ and $Y$ as described in the paragraph about variable gadgets; this is obtained by identifying them with appropriate vertices of such a gadget or a wire. Namely, the value of $X$ is $\lambda\cdot x$, and the value of $Y$ is $\lambda\cdot y$. To realize the area $\nicefrac{1}{(2\lambda^2)}$ of the triangle $v_xv_yd$, the lengths $x$ and $y$ are forced to satisfy $\nicefrac{xy}{2} = \nicefrac{1}{(2\lambda^2)}$. This implies that the value of  $X \cdot Y$ is $\lambda x \cdot \lambda y = 1$. Consequently, the inversion constraint is satisfied.

Finally, we define an \emph{addition gadget} for an addition constraint $X+Y=Z$.
We introduce fixed vertices $a,b,c,d,e,f,g,h,i,j,k,l$ placed on an integer grid, as depicted in \autoref{fig:add}. 
Moreover, flexible vertices $v_x,v_y,v_z$ encode the values of variables $X,Y,Z$, respectively;
 they are identified by appropriate flexible vertices in  wire gadgets. 
This ensures that $v_x$, $v_y$, $v_z$ are forced to lie on the segments $cg$, $fj$, and $k l$, respectively.
Finally, we introduce the flexible vertex $q$, joined to $d,e,i,$ and $h$. 
The areas of faces are set as follows:
\begin{align*}
f_1 &= v_x,c,a,b,f,v_y,e,q,d & \A(f_1) &= 3.5\\
f_2 &= v_x,d,q,h,v_z,k,g & \A(f_2) &=2\\
f_3 &= e,v_y,j,i,q & \A(f_3)&=1\\
f_4 &= v_z,l,i,q,h & \A(f_4)&=1.
\end{align*}

\begin{figure}[ht]  
 \begin{center}
   \includegraphics[page=5]{PlanarAll2}
   \caption{The addition gadget.}
   \label{fig:add}
   \end{center}  
\end{figure}

Let $x$ be the distance from $c$ to $v_x$, let $y$ be the distance from $f$ to $v_y$, and let $z$ be the distance from $k$ to $v_z$. 
Since $v_x, v_y, v_z$ encode the values 
of $X$, $Y$, and $Z$, respectively,  we know that $X = \lambda x$, $Y = \lambda y$, and $Z = \lambda z$.
We aim to show that the areas of faces of the gadget force that $x+y=z$.
Observe that the area of the triangle
 $c,v_x,d$ is $a_x := \nicefrac{x}{2}$, the area 
 of the triangle $e,f,v_y$ is $a_y := \nicefrac{y}{2}$ 
 and the area of $k,h,v_z$ is $a_z :=\nicefrac{z}{2}$.
Thus is it sufficient to show that $a_x + a_y = a_z$. 

The area of the triangle $d,e,q$ is forced 
to be $\nicefrac{1}{2} - a_x - a_y$, because of the 
constraint on the area of $f_1$.
Note if $\Psi$ is a positive instance, we may assume that 
$Z = X + Y < \lambda$, thus the area of the triangle $d,e,q$ 
is positive. 
On the other hand, if $X+Y > \lambda$, it is easy to see that 
there is no way to put flexible points to realize the prescribed areas.
Since the triangles $d,e,q$ and $h,i,q$ have 
parallel bases and share the third vertex, 
we observe that the area of the triangle 
$h,i,q$ is forced to be $a_x + a_y$.
It is easy to see that the area of $f_4$ equals 
$1 = A(h,k,l,i) + A(h,q,i) - A(h,k,v_z) = 1 + a_x + a_y - a_z.$
Thus we enforce $a_x + a_y = a_z.$

It remains to show that all faces have the 
correct area given $a_x+a_y = a_z$.
Observe that the area of the triangle $e,q,i$ is 
forced to be equal $a_y$, because of the constraint on the area of $f_3$.
Consequently, the triangle $d,q,h$ has area $\nicefrac{1}{2} - a_y$.
It holds that
\[\A(f_2) = A(c,d,h,k) - A(c,v_x,d) + A(d,h,q) + A(h,k,v_z) = \nicefrac{3}{2} -  a_x +(\nicefrac{1}{2}- a_y) + a_z = 2.\]

Finally, observe that by connecting the gadgets 
we might have introduced some inner faces, which 
do not have specified areas. However, the 
area of each gadget is fixed, so we exactly 
know what the areas of these newly 
introduced faces are. We finish the construction 
by fixing the area for each remaining face, 
according to the way it is drawn in the plane.
 
\enlargethispage{12pt}
Hence we created a planar graph $G_\Psi$ where the positions of some vertices are fixed, and 
a positive area assignment $\A$, such that $G_\Psi$ has 
an $\A$-realizing drawing that respects the fixed vertices if and only if $\Psi$ is 
true. The correctness of the construction 
follows easily from the correctness of individual gadgets.
Let us point out  that we may assume that all variables are in $(0,\lambda)$ and thus
 the flexible vertex of each variable gadget is not superimposed on another vertex;
this property propagates to wires and other gadgets.
Consequently, the realizing drawing is planar.
The size of $G$ is clearly polynomial in $|\Psi|$, 
so the proof is complete.
\end{proof}

\section{Discussion and Open Problems}\label{sec:Others}
In this section, our aim is two-fold. Firstly, we present further variants of \AUpos that might pave the path on the journey to prove \FER-hardness. Secondly, we want to motivate further research on the complexity classes \FER and \EFR by presenting  candidates of complete problems. In particular, we highlight connections to the concepts of imprecision, robustness, and extendability.  

\subsection{Open problems related to \AUpos}

Our first open problem aims at strengthening \cref{thm:hyper}.

\begin{q}\label{problem:innerTriangPartPres}
	Given a plane inner triangulation $T$ and an area assignment $\A'$ for some inner faces of $T$. Is it \FER-complete to decide whether there exist realizing drawings for all area assignments extending $\A'$?
\end{q}

Similarly, it is interesting to investigate the complexity of deciding whether or not an area assignment is realizable.
\begin{q} \label{problem:PAA}
	Is \PAA \ER-complete?
\end{q}
 Potentially, answering \cref{problem:innerTriangPartPres} also implies an answer for \cref{problem:PAA}.

%
%

\subsection{Candidates for \FER- and \EFR-complete problems }
Complexity classes become interesting if they contain interesting algorithmic problems.
To motivate the research on \FER and \EFR, we present some candidates of problems that might be complete for these classes. 
We draw connections to existing concepts in computer science, like robustness and imprecision.
Although \FER and \EFR  are different complexity classes, it is worth mentioning that the complement of every language in \EFR belongs to \FER. Thus from an algorithmic point of view they are equally difficult.

The aim of this section is to point at potentially interesting future problems rather than giving technical insights.
We start with the problem we find most natural, besides \AU. Afterwards we come to the notions of universal extension problem, imprecision, and robustness. We conclude with a problem that has a similar flavor as area-universality but turns out to be polynomial time solvable.

A very natural metric for point sets is the 
so-called Hausdorff distance. For two sets 
$A,B\subseteq \R^d$, the Hausdorff distance 
$d_H(A,B)$ is defined as 
\[d_H(A,B) = \max \{ \sup_{a\in A} \inf_{b\in B} 
\|ab\|,  \sup_{b\in B} \inf_{a\in A}\|ab\|\},\]
where $\|ab\|$ denotes the Euclidean distance 
between the two points $a$ and $b$.
We define the corresponding 
algorithmic problem as follows:
\begin{framed}
\noindent {\bf{\textsc{Hausdorff Distance}}}\\
\noindent {\bf Input:} Quantifier-free formulas $\Phi$ and $\Psi$ in the first-order theory of the reals, $t\in \Q$.\\
\noindent {\bf Question:} Is $d_H(S_{\Phi},S_{\Psi})\leq t$?
\end{framed}
Recall that $S_{\Phi}$ is defined as $\{x\in \R^n : \Phi(x) \}$.
Marcus Schaefer~\cite{SchaeferHausdorf} pointed out the following interesting question: 
\begin{q} 
 Is computing the Hausdorff distance of semi-algebraic sets \FER-complete?
\end{q}

 \begin{figure}[tbhp]
 	\centering
  	\includegraphics{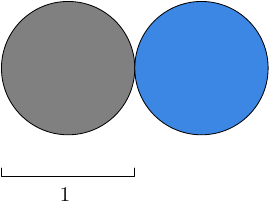}
 	\caption{The two disks have Hausdorff-distance $1$,
 	but $d$-distance $0$.}
 	\label{fig:DifferentDistances}
 \end{figure}
 
 Schaefer and \v{S}tefankovi\v{c}~\cite{DBLP:journals/mst/SchaeferS17} studied the 
 following very common notion of distance, 
 which we call here \emph{d-distance} for clarity. 
 The $d$-distance between two  sets $A$ and $B$ 
 in $\R^d$ is defined as 
 $d(A,B)= \inf\{\|ab\|: a\in A, b\in B\}.$ 
 See~\cref{fig:DifferentDistances} to see
 a simple example on how those two distances
 are different.
 They show the following lemma.
\begin{lemma}[\cite{DBLP:journals/mst/SchaeferS17}]
Deciding if two semi-algebraic sets have $d$-distance zero is \ER-complete.
\end{lemma}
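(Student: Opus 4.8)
The plan is to establish \ER-completeness by treating the two directions separately: \ER-hardness is essentially immediate, while membership in \ER is where all the work lies. For hardness, reduce directly from \ETR: given a quantifier-free formula $\Phi(X_1,\dots,X_n)$, view the \ETR-instance as asking whether $S_\Phi$ is nonempty, and output the $d$-distance instance $(\Phi,\Phi)$. If $S_\Phi\neq\emptyset$, then for any $a\in S_\Phi$ we have $\|a-a\|=0$, so $d(S_\Phi,S_\Phi)=0$; if $S_\Phi=\emptyset$, the set $\{\|a-b\| : a,b\in S_\Phi\}$ is empty, so its infimum is $+\infty$. Hence $d(S_\Phi,S_\Phi)=0$ iff $\Phi$ is satisfiable, and the reduction is clearly polynomial. (If the two input sets are required to come from syntactically different formulas, take $\Phi$ with renamed variables for the second set.)

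For membership, the naive formulation $\forall\varepsilon>0\;\exists a\in S_\Phi,\,b\in S_\Psi:\|a-b\|^2<\varepsilon$ only shows containment in \FER, so the task is to rewrite ``$d=0$'' as a genuinely existential sentence of polynomial size. I would first observe that $d(S_\Phi,S_\Psi)=0$ iff $0\in\overline{C}$, where $C:=\{a-b : a\in S_\Phi,\ b\in S_\Psi\}$ is a semialgebraic subset of $\R^d$ still defined by a polynomial-size existential formula, so it suffices to express ``$0\in\overline{C}$'' existentially. If $0\in C$ this is direct. Otherwise, by the semialgebraic curve selection lemma there is a semialgebraic arc $\gamma\colon[0,t_0)\to\R^d$ with $\gamma(0)=0$ and $\gamma((0,t_0))\subseteq C$, which lifts to arcs $a(t)$ in $S_\Phi$ and $b(t)$ in $S_\Psi$ with $a(t)-b(t)\to 0$ as $t\to 0^+$. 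The plan is to guess polynomial parametrizations $a(t),b(t)$ of bounded degree (existential variables for the coefficients, after reparametrizing away any fractional powers) together with $t_0>0$, and to assert: that the constant terms of the coordinate polynomials of $a(t)$ and of $b(t)$ agree (encoding $a(t)-b(t)\to 0$), and that $a(t)\in S_\Phi$ and $b(t)\in S_\Psi$ for all $t\in(0,t_0)$. The last condition is freed of the quantifier over $t$ as follows: every polynomial $p_j$ appearing in $\Phi$ becomes, after substitution, a univariate polynomial $p_j(a(t))$ of bounded degree, and its sign as $t\to 0^+$ is the sign of its lowest-order nonzero Taylor coefficient at $0$; thus ``$a(t)\in S_\Phi$ on a right-neighbourhood of $0$'' is a Boolean combination of sign conditions on finitely many polynomial expressions in the guessed coefficients, which is existentially expressible (with a harmless case split over which coefficients vanish), and similarly for $b(t)$.

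The main obstacle is the complexity bookkeeping in the membership direction: one must argue that the arc supplied by curve selection — and hence the degree of the guessed parametrizations $a(t),b(t)$ — can be kept polynomially bounded in the size of $\Phi$ and $\Psi$, so that the resulting \ETR-instance has polynomial size, and the case analysis over which Taylor coefficients vanish must be arranged so as not to blow up. By contrast the hardness direction requires no such care.
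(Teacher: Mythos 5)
First, note that the paper does not prove this lemma at all: it is quoted verbatim from Schaefer and \v{S}tefankovi\v{c} \cite{DBLP:journals/mst/SchaeferS17}, so there is no in-paper proof to match; your attempt has to be judged against the known argument. Your hardness direction is fine (and essentially trivial): mapping $\Phi$ to the pair $(\Phi,\Phi)$ makes $d$-distance zero equivalent to nonemptiness, which is exactly \ETR.

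The membership direction, however, has a genuine gap, and it is not just the degree bookkeeping you defer at the end. Your encoding guesses \emph{polynomial} arcs $a(t)\in S_\Phi$, $b(t)\in S_\Psi$ whose constant terms agree. This presumes that when the infimum is not attained one can still find witnessing arcs that converge as $t\to 0^+$, which is false in precisely the interesting cases: for $S_\Phi=\{(x,y): xy=1,\ x>0\}$ and $S_\Psi=\{(x,y): y=0\}$ the distance is $0$, but any arcs with $a(t)-b(t)\to 0$ must have first coordinate tending to infinity, i.e.\ the lifts of the selected curve are Laurent/Puiseux series with poles at $t=0$, not polynomials; moreover a polynomial arc lying in $\{xy=1\}$ is necessarily constant (an equation constraint forces the substituted polynomial to vanish identically), so your existential sentence is simply false on this instance while the answer is ``yes''. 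Reparametrizing away fractional powers does not repair this, since the obstruction is unboundedness, not fractional exponents. Even if one switches to the correct Puiseux/rational parametrizations (or works projectively), the degree bounds coming from the curve selection lemma are singly exponential in the number of variables, so the list of guessed coefficients --- and hence your \ETR formula --- is exponential in the input size; the polynomial bound you hope for is not available. The actual proof in \cite{DBLP:journals/mst/SchaeferS17} avoids arcs altogether: one shows that if the distance between two semi-algebraic sets given by formulas of total size $L$ is positive, then it is at least doubly exponentially large in $-\,\mathrm{poly}(L)$, so $d(S_\Phi,S_\Psi)=0$ if and only if there exist $a\in S_\Phi$, $b\in S_\Psi$ with $\|ab\|\le \varepsilon_0$ for one fixed doubly exponentially small $\varepsilon_0$, which is written existentially with polynomially many auxiliary variables via repeated squaring. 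Replacing your curve-selection step by such a separation bound is what makes the membership argument go through.
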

Note that containment in \ER is not clear for the Hausdorff-distance. However, it is contained in \FER.
An encoding of Hausdorff-distance as a formula of 
the first order theory of the reals looks as follows:
\[(\forall a\in A, \varepsilon > 0 \ \exists b \in B :
 \|ab\|< \varepsilon + t )
 \  \land \
 (\forall b'\in B, \varepsilon > 0 \
 \exists a' \in A : \|a'b'\|< \varepsilon + t). \]
The formula holds true if and only if $d_{H}(A,B) \leq t$.
Note that we need the $\varepsilon$ in the formula,
because semi-algebraic sets can be open.
This can be easily reformulated as a formula 
in prenex form with two blocks of quantifiers.
By using the following logical equivalences: 
\begin{align*}
 \big(\forall x \exists\ y \colon \Phi(x,y)\big) \land \big(\forall x' \ \exists y' \colon \Phi'(x' ,y') \big)
&\equiv 
\forall x,x'\ \exists y,y' \colon \Phi(x,y) \land \Phi'(x' ,y') \text{ and }\\
\forall x\in X \ \exists y\in X : \Phi(x,y)
&\equiv 
\forall x \ \exists y : x\in X \Rightarrow \big( y\in Y \land \Phi(x,y) \big),
\end{align*}
where $\Phi$ and $\Phi'$ are two quantifier free formulas, and 
$\equiv$ indicates that the 
two formulas are logically equivalent.

By definition, a semi-algebraic set is a 
subset $S$ of $\mathbb R^n$ defined by a 
finite sequence of polynomial equations 
and strict inequalities or any finite union of such sets.

\paragraph*{Universal Extension Problems.}
Before we explain the general concept of an Universal Extension problem, 
we take a look at a specific example: the \AGP. In this problem we 
are given a simple polygon $P$ and we say that a point $p$ \emph{sees} 
another point $q$ if the line segment $pq$ is fully contained in $P$. 
The \AGP asks for a smallest set of points, which are called guards, 
such that every point inside the polygon is seen by at least one guard. 
It was recently shown that \AGP is \ER-complete~\cite{AbrahamsenAM18STOC}.
In an extension version of \AGP, we are given a polygon and a partial set of guards $G_1$. The task is to find a set of guards $G_2$ such that every point inside the polygon is seen by at least one guard from the set $G_1 \cup G_2$.
So the spirit of an extension problem is to give a partial solution as an additional input and ask if it is extendable to a full solution. 
Now we define the universal extension variant of the \AGP as follows. The input consists of a simple polygon $P$ and a set of regions $R_1,\ldots,R_t$ inside $P$. We ask whether for every guard placement of $G_1 = \{g_1,\ldots,g_t\}$ such that $g_i\in R_i$ it holds that there exists a second set of guards $G_2$ of some given size such that $G_1\cup G_2$ guard the entire polygon. We denote this as the \textsc{Universal Guard Extension} problem. For an example consider \autoref{fig:extGuarding}.

\begin{figure}[htbp]
 \centering
 \includegraphics[page=1]{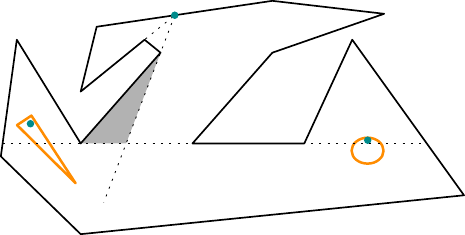}
 \caption{If two guards in the orange regions are badly placed it is impossible to guard the remaining polygon with just one guard.}
 \label{fig:extGuarding}
\end{figure}

\begin{framed}
\noindent {\bf \textsc{Universal Guard Extension}}\\ 
   \noindent {\bf Input:} A polygon $P$, regions $R_1,\ldots,R_t$ inside $P$, and a number $k\in \N$.\\   
    \noindent {\bf Question:} Is it true that for every placement of guards $G_1 = \{g_1,\ldots,g_t\}$ with $g_i\in R_i$, there exists a set $G_2$ of $k$ guards, such that $G_1\cup G_2$ guard $P$ completely?
 \end{framed}
 
 \begin{q}
  Is \noindent \textsc{Universal Guard Extension} \FER-complete?
 \end{q}

The spirit is that we do not want to extend one given partial solution (as in the extension variant), but all possible partial solutions. In this example we require the guards to lie in specific regions. This is necessary, as otherwise we had to consider such extreme cases as a partial solution with all the guards in a single point, which would not be meaningful. As we will later see it seems to be a common theme that we have to impose some extra conditions on partial solutions and there is usually a choice which conditions we want to impose.

Many problems in combinatorics, computational geometry, and computer science ask for the realization or existence of a certain object, for instance: a triangulation, an independent set, a plane drawing of a given graph, a plane drawing of a graph with prescribed areas, or a set of points in the plane realizing a certain combinatorial structure.
These questions become extension questions as soon as a part of the solution is already created and the question is, whether there is a way to finish it.
For instance: can a given plane matching be extended to a perfect matching, can a given plane graph be extended to a $3$-regular graph. When going to an extension question the difficulty might increase. This is  the case for matching extensions: Given a set with an even number of points in the plane there always exists a perfect, straight-line, crossing-free matching on that set of points. However, it is \NP-hard to decide if a given partial matching can be extended to a perfect one~\cite{DBLP:journals/corr/abs-1206-6360}. This is a good reason why our  results must be considered with care as~\cref{thm:hyper} is by its nature an extension result.

We take this one step further and ask whether there is a way to complete \emph{every} ``reasonable'' partial realization. As explained before, we might need to impose some problem-specific conditions, for the question to become meaningful. 

Without giving explicit definitions, we mention here a few more problems that are worth considering:
\begin{itemize} [topsep=1pt,itemsep=-1pt,partopsep=1pt,parsep=1pt,leftmargin=0.8cm]
\item Order-Type Extension~\cite{Mnev,matousekSegments},
\item Extension variant of the Steinitz problem~\cite{richter1995realization},
\item Graph Metric Extension~\cite{Muller2017},
\item Simultaneous Graph Embedding Extension~\cite{JGAA-218}.
\end{itemize}

\paragraph*{Imprecision.}
\newcommand{\imGuarding}{\textsc{Guarding under Imprecision}}
\newcommand{\unGuarding}{\textsc{Universal Guard Set}}

Again, we use the \AGP as an example. 
We introduce the \textsc{Imprecise Guarding} and \textsc{Universal Guarding} problems.
The underlying idea is to guard a polygon, but we know the polygon only in an imprecise way. One might think of two different scenarios. In the first scenario, we want to know whether it is always possible to guard the polygon with $k$ guards, no matter how the actual polygon behaves. In the second scenario, we want to find a set of \emph{universal guards} that will guard every possible polygon. \autoref{fig:UniGuarding} depicts an example of a universal guard set.

\begin{figure}[htbp]
 \centering
 \includegraphics[page=2]{OthersFigures}
 \caption{The cyan guard sees every point of the imprecisely given polygon. Therefore, it is a universal guard (set).}
 \label{fig:UniGuarding}
\end{figure}

\begin{framed}
\noindent {\bf \imGuarding}\\
\noindent {\bf Input:} A set of unit disks $d_1,\ldots,d_n$ and a number $k\in \N$. \\
\noindent {\bf Question:} Is it true that for every set of points $p_1\in d_1, \ldots ,p_n\in d_n$ the polygon described by $p_1,\ldots,p_n$ can be guarded by $k$ guards?
\end{framed}

\begin{framed}
\noindent {\bf \unGuarding}\\
\noindent {\bf Input:} A set of unit disks $d_1,\ldots,d_n$ and a number $k\in \N$. \\
\noindent {\bf Question:} Is there a set $G$ of $k$ guards such that  for every set of points $p_1\in d_1, \ldots ,p_n\in d_n$ the polygon described by $p_1,\ldots,p_n$ is guarded by $G$?
\end{framed}

Both solution concepts seem sensible as they are able to deal with the situation that the input is not known precisely, nevertheless, they guarantee a precise solution. Algorithms solving this problem must be more fault-tolerant in the sense that they are forgiving towards small errors in the input. It is easy to see that \textsc{Guarding under Imprecision} is contained in \FER and that \textsc{Universal Guard Set} is contained in \EFR. 
Using the Cook-Levin analog (\cref{thm:CookLevin}) may be the simplest way to see this.
Therefore, we wonder:
\begin{q}
 Is \textsc{Guarding under Imprecision} \FER-complete?
\end{q}
\begin{q}
 Is \textsc{Universal Guard Set} \EFR-complete?
\end{q}
For any geometric problem, we can ask whether there is a universal solution that still works under any small perturbation. However, most problems with geometric input are contained in \NP. On the other hand, problems that are known to be \ER-hard have usually combinatorial input. The \AGP 
has geometric input and
is \ER-complete.
These features make it distinct 
from many other geometric problems.
This means there might not be too many natural problems, which become \FER-complete if the algorithm is required to deal with imprecision.
Another exception to this rule of thumb are problems related to linkages~\cite{Connelly2003}. 

Let us give in this context another example 
to clarify these observations. Consider 
a unit-disk intersection graph $G$ 
given explicitly by a set of disks in 
the plane. We can ask for a dominating 
set. Now the problem becomes more 
challenging, if we ask if the graph 
can be dominated even after some 
small perturbation of the disks.
The perturbations of the input can also 
be understood as the inherent imprecision of the input. 
\begin{framed}
\noindent {\bf \textsc{Domination of Imprecise Unit Disks}}\\
\noindent {\bf Input:} A set of unit disks 
$d_1,\ldots,d_n$, a number $k\in \N$, and 
a number $\delta\in \Q$. \\
\noindent {\bf Question:} Is it true
that for every translation of each disk 
by a vector of length at most $\delta$, 
it is possible to dominate 
the resulting disk intersection 
graph by $k$ disks? 
\end{framed}
Note that the problem of finding a dominating set 
in a graph is contained in \NP. However, it 
is unclear if this remains the case even with the
perturbation. Additionally, the perturbation  
might be captured best with real-valued variables.
However, as a dominating set can be described in 
a discrete way, it looks unlikely to be \FER-complete.
However the following weaker question might have a positive answer.
\begin{q}
 Is \textsc{Imprecise Domination of Unit Disks} $\Sigma^p_2$
 and $\ER$-complete?
\end{q}
Indeed, we think that it might be true that
 there is a plausible class between 
 $\Sigma^p_2$ and \FER capturing imprecision issues.

\paragraph*{Robustness.}
As above, we use the \AGP as an illustrative example and we define the \textsc{Robust Guarding} problem as follows. 
\begin{framed}
\noindent {\bf\textsc{Robust Guarding}}\\
\noindent {\bf Input:} A simple polygon $P$ and a number $k\in \N$. \\
\noindent {\bf Question:} Does there exist a set of unit disks $D= \{d_1,\ldots,d_k\}$, each fully contained inside the polygon, such that for every placement $G = \{g_1,\ldots,g_k\}$ of guards with $g_i\in d_i$, the placement $G$ guards the whole polygon $P$?
\end{framed}
\begin{figure}[htbp]
	\centering
	\includegraphics[page=3]{OthersFigures}
	\caption{This guarding is not robust. The orange region might not be guarded.}
\end{figure}

It can be easily seen that this problem is contained in $\EFR$. The positions of the disks are existentially quantified, the positions of the guards are universally quantified, and the remaining formula enforces the guards to be inside the disks, the disks inside the polygon, and the condition that the guards are indeed guarding the whole polygon. 

\begin{q}
 Is \textsc{Robust Guarding} \EFR-complete?
\end{q}

In more general terms we employ the notion that a solution is robust, if it remains a valid solution also if it gets slightly perturbed. 
Another example is \textsc{Robust Order Type Realizability}. 
Given a set of points $P$ in the plane, each ordered triple is either in clockwise or counter clockwise direction. 
In \textsc{Order Type Realizability}, we are given an orientation for each triple of some abstract set and we are asked to find a set of points in the plane with that given order type. Robustness  asks if the solution remains correct, even after some small perturbation. As order types stay fixed under uniform scaling it makes sense to restrict the points to lie in a unit square.
\begin{framed}
\noindent {\bf\textsc{Robust Order Type Realizability}}\\
\noindent {\bf Input:} An abstract order type of $n$ points and a rational number $w>0$.\\
\noindent {\bf Question: } Does there exists a set of disks $d_1,\ldots,d_n$ with radius $w$ in the unit square $[0,1]^2$ such that every set of points $p_1\in d_1,\ldots , p_n\in d_n$ has the given order type?
\end{framed}
\begin{q}
 Is \textsc{Robust Order Type Realizability} \EFR-complete?
\end{q}

We assume here that $w$ is encoded in binary. Furthermore, if an order type is realizable for some $w_1$ then it is also robustly realizable for $w_2<w_1$. Thus this robustness problem becomes an optimization problem with respect to the parameter $w$.
It is conceivable that at least an approximate solution to the problem can be found by restricting the centers of the disks to a fine grid. 
This would imply that we can find a polynomial-sized witness for an approximate solution and thus the robust problem might become easier than the basic version for all these problems. Recall that for most of the mentioned recognition problems are \ER-complete~\cite{cardinal2017intersection,matousekSegments,mcdiarmid2013integer,schaefer2013realizability}.

In a similar spirit, we can define the robust variants for all kinds of recognition problems of intersections graphs,  such as intersection graph of  unit disks, segments, rays, unit segments, or of your favorite geometric object.
\begin{q}
 Is \textsc{Robust Recognition of Intersection graphs} \EFR-complete?
\end{q}

\begin{figure}[htbp]
 \centering
 \includegraphics[page=4]{OthersFigures}
 \caption{An intersection graph of hearts.}
 \label{fig:HeartShapes}
\end{figure}

\paragraph*{Environmental Nash-Equilibria.}
Finding Nash equilibria~\cite{GVK369342747} 
is one of the most important problems in game 
theory and it is known to be 
PPAD-complete~\cite{Daskalakis:2006:CCN:1132516.1132527}.
As every game has a Nash equilibrium, there 
is no point in formulating a corresponding 
decision problem.
However, if we restrict our attention to 
a small region of the strategy space, the 
problem becomes \ER-complete already for 
two players~\cite{DBLP:journals/mst/SchaeferS17}. 
Restricting the strategy spaces makes sense, 
as one might be looking for an exact solution ``close'' 
to a given approximate one.
A natural question we could ask is whether in 
a three-player game the first two players 
can find a Nash equilibrium for any fixed 
behavior of the third player. The underlying 
idea is that the third player simulates a 
potentially changing environment.

\subsection{A false candidate -- Universal Graph Metric}
Given a graph $G=(V,E)$ together with some 
edge weights $w:E\rightarrow \R^+$, we can 
ask for an embedding $\varphi$ of $G$ in the plane such 
that for each edge $u,v$ the distance 
$\textrm{dist}(u,v)$ in the plane equals 
the edge weight~$w(uv)$. 
In this case we say $\varphi$ realizes the edge-weight $w$.
Interestingly, it is  \ER-complete to decide if
an edge weight with all weights equal $1$
can be realized~\cite{schaefer2013realizability}.

Many edge weights, are trivially not
attainable, as they might not
satisfy the triangle inequality.
For the purpose of concreteness let us say that
a metric is \emph{reasonable} if there exists a
Euclidean Space of some dimension, into 
which the graph is embeddable.
It is easy to see that the dimension
can be upper bounded by the number of vertices.

We are ready to define the \textsc{Universal Graph Metric} problem. 

\begin{framed}
\noindent {\bf{\textsc Universal Graph Metric}}\\ 
   \noindent {\bf Input:} A graph $G$.\\   
    \noindent {\bf Question:} Is it true that for 
    every reasonable edge weight function, there exists 
    a realizing embedding in the Euclidean plane?
 \end{framed}

\begin{figure}[htbp]
	\centering
	\includegraphics[page=5]{OthersFigures}
	\caption{A graph with given edge weights 
	(on the left) and an embedding in the 
	Euclidean plane realizing the edge weights (on the right).}
\end{figure} 
 
\textsc{Universal Graph Metric} clearly has a 
similar spirit as area-universality. 
The edge lengths are universally quantified and 
the drawing is existentially quantified. 
However, a simple argument shows the 
following (surprising), but known, fact:

\begin{proposition}[Belk, Connelly~\cite{BelkC07}]
 \textsc{Universal Graph Metric} is in P.
\end{proposition}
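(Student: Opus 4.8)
The plan is to reduce the universal-realizability question to a finite forbidden-minor condition that can be checked efficiently. First I would unwind the definitions: a graph $G$ is a {\sc yes}-instance of \textsc{Universal Graph Metric} exactly when every edge-weight function that admits a realization in \emph{some} Euclidean space $\R^d$ already admits one in $\R^2$; this is precisely the notion of a \emph{$2$-realizable} graph studied by Belk and Connelly~\cite{BelkC07}. The observation recalled above — that the ambient dimension of a reasonable metric may be taken to be at most $|V|$ — guarantees that ``reasonable'' coincides with ``realizable in some $\R^d$'', so there is no hidden quantifier over dimensions.

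Next I would invoke the main theorem of Belk and Connelly, which characterizes the $2$-realizable graphs as exactly the graphs having neither $K_4$ nor $K_{2,3}$ as a minor. By the classical characterization of outerplanar graphs (Chartrand--Harary), this is precisely the class of outerplanar graphs. Hence $G$ is a {\sc yes}-instance if and only if $G$ is outerplanar.

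Finally, outerplanarity is decidable in linear time — for instance by testing planarity of the graph obtained from $G$ by adjoining one new vertex adjacent to all of $V$, or directly by testing for the two fixed minors $K_4$ and $K_{2,3}$ — so the problem lies in P. The routine directions are that $2$-realizability is minor-closed and that neither $K_4$ nor $K_{2,3}$ is $2$-realizable (one exhibits edge lengths realizable in $\R^3$ but not in $\R^2$, e.g.\ a nondegenerate tetrahedron, respectively a generic placement of $K_{2,3}$ in space); this already yields the ``only if'' direction. The main obstacle — and the part I would simply cite rather than reprove — is the converse: Belk and Connelly's structural argument that every $K_4$- and $K_{2,3}$-minor-free graph is $2$-realizable, obtained by decomposing the graph along cut vertices and along the two-vertex cuts of its $2$-connected blocks and gluing planar realizations together, using at each gluing step the freedom to fold one part about a line back into the plane without changing any edge length.
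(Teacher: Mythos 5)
Your overall strategy (reduce to an explicit forbidden-minor characterization and test for those minors) is legitimate, but the characterization you invoke is wrong, and this is a genuine gap. Belk and Connelly's theorem states that a graph is $2$-realizable if and only if it has no $K_4$ minor; the class of {\sc yes}-instances is therefore the class of series-parallel graphs (partial $2$-trees), not the class of outerplanar graphs. In particular $K_{2,3}$ \emph{is} $2$-realizable, contrary to your claimed ``routine'' direction: given any realization of $K_{2,3}$ in $\R^N$ with parts $\{a,b\}$ and $\{x_1,x_2,x_3\}$, each $x_i$ is constrained only by its distances to $a$ and $b$, hence lies on a sphere-intersection that meets any plane containing $a$ and $b$; rotating each $x_i$ independently into such a plane preserves all five edge lengths. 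So a ``generic placement of $K_{2,3}$ in space'' can always be folded into the plane, and your ``only if'' direction fails. The final conclusion (membership in P) survives, since testing for a $K_4$ minor is still polynomial, but as written the proof certifies the wrong graph class.

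For comparison, the paper avoids relying on the explicit characterization altogether: it shows directly that the set of {\sc yes}-instances is minor-closed (extend a reasonable weight function of a minor $F$ to one of $G$ by giving contracted edges weight $0$ and inferring deleted-edge weights from a high-dimensional realization, then restrict a planar realization of $G$ back to $F$), and then invokes Robertson--Seymour to conclude that membership is polynomial-time testable; the identity of the single forbidden minor $K_4$ is mentioned only as a known aside. That route is robust to not knowing the exact obstruction set. If you prefer your more explicit route, the fix is simply to quote the correct Belk--Connelly statement (sole forbidden minor $K_4$) and test for a $K_4$ minor, equivalently for treewidth at most $2$, which is linear-time.
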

\begin{proof}
The argument builds on the powerful theory 
of minor-closed graph classes. 
Recall that a graph $F$ is a \emph{minor} 
of a graph $G$ (denoted by $F\prec G$), if we can attain 
$F$ from $G$ by deleting edges, 
vertices, or contracting edges.
A graph class $\G$ is minor-closed, if $G \in \G$ 
and $F\prec G$ imply $F \in \G$.
A celebrated and deep theorem of Robertson and 
Seymour~\cite{ROBERTSON2004325} 
implies that every 
minor-closed graph class can be characterized 
by a finite number of forbidden minors. Since 
testing if a graph $G$ contains a fixed graph 
$F$ as a minor can be tested in polynomial time,  
the membership of a graph in any minor-closed 
graph class is checkable in polynomial time.

\begin{figure}[htbp]
	\centering
	\includegraphics{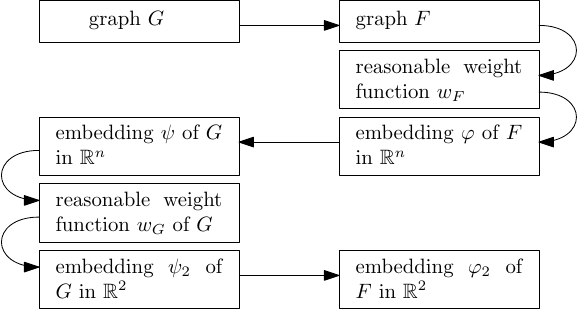}
	\caption{ \textsc{Universal Graph Metric} is
	a minor closed property.}
	\label{fig:MinorClosed}
\end{figure} 

It is easy to observe that the implicitly 
defined graph class of \textsc{Yes}-instances 
of the \textsc{Universal Graph Metric} problem is 
indeed minor-closed: Let $\mathcal G$ denote 
the class of \textsc{Yes}-instances for all \emph{reasonable} 
weight function. For the following argument refer
to~\cref{fig:MinorClosed}.
Let $G\in\mathcal G$ be a graph with a minor $F$.
Let $w_F$ be a reasonable weight function of $F$.
We need to show that $F$ has an embedding
respecting $w_F$ into $\R^2$.
(Here $n$ denote the number of vertices of $G$, we can assume
that the embedding is w.l.o.g. into $\R^n$ 
as we have at most $n$ vertices.)
For that purpose consider an embedding $\varphi$
of $F$ into $\R^n$ realizing $w_F$.
This exists by definition since $w_F$ is reasonable.
We can extend $\varphi$ to an embedding $\psi$
of $G$. For contracted edges, we use weight $0$,
deleted vertices are placed arbitrarily,
and weights of deleted edge  are inferred by the embedding.
This defines a weight function $w_G$
of $G$, which is, as we have just shown, reasonable.
Thus there is an embedding $\psi_2$ of 
$G$ into $\R^2$ as $G \in \mathcal{G}$.
This embedding restricted to the vertices of $F$
gives an embedding of $w_F$ into $\R^2$.
Thus $F\in \mathcal{G}$.
\end{proof}

It is easy to see that if we require just the triangle inequality, then $K_4$ is a forbidden minor
and it is known to be the only one~\cite{Belk07,BelkC07}.
If we choose \emph{reasonable} to be 
all weight functions, then due to the 
triangle inequality, the \textsc{Yes}-instances 
are exactly the set of trees and the 
forbidden minor is the triangle. 
In a recent Master Thesis by 
Muller~\cite{Muller2017} other meanings 
of \emph{reasonable} are discussed.

It is worth noting that the same proof does not work for 
\AU. It is easy to see that if $G$ is a graph and 
we remove either an edge or a vertex, then the resulting 
graph is still area-universal.
The same does not hold for edge contractions.
To see this, let~$G$ denote a graph that is not area-universal.
Due to Kleist~\cite{kleist1}, we know that the
$1$-subdivision~$F$ of $G$ is area-universal.
We contract the subdivided edges of $F$
one by one until we obtain $G$.
At some point in this process, the contraction
of an edge destroys the area-universality.

\subsection*{Acknowledgments}

We thank Jeff Erickson for some clarifications regarding
the Cook-Levin-type theorem.
%
%

\bibliographystyle{plainurl}       
\bibliography{Bib}

\begin{thebibliography}{10}

\bibitem{wikiIntercept}
Intercept theorem formula --- {W}ikipedia{,} the free encyclopedia.
\newblock [Online; accessed 14-Jan-2020].

\bibitem{AbrahamsenAM18STOC}
Mikkel Abrahamsen, Anna Adamaszek, and Tillmann Miltzow.
\newblock The art gallery problem is {$\exists\mathbb{R}$}-complete.
\newblock In Ilias Diakonikolas, David Kempe, and Monika Henzinger, editors,
  {\em Proceedings of the 50th Annual ACM SIGACT Symposium on Theory of
  Computing ({STOC} 2018)}, pages 65--73. {ACM}, 2018.
\newblock \href {http://dx.doi.org/10.1145/3188745.3188868}
  {\path{doi:10.1145/3188745.3188868}}.

\bibitem{DynamicToolboxETRINV}
Mikkel Abrahamsen and Tillmann Miltzow.
\newblock Dynamic toolbox for {ETRINV}.
\newblock {\em CoRR}, abs/1912.08674, 2019.
\newblock \href {http://arxiv.org/abs/1912.08674} {\path{arXiv:1912.08674}}.

\bibitem{basu2006algorithms}
Saugata Basu, Richard Pollack, and Marie-Fran{\c{c}}oise Roy.
\newblock {\em Algorithms in real algebraic geometry}.
\newblock Springer-Verlag, 2006.
\newblock \href {http://dx.doi.org/10.1007/3-540-33099-2}
  {\path{doi:10.1007/3-540-33099-2}}.

\bibitem{Belk07}
Maria Belk.
\newblock Realizability of graphs in three dimensions.
\newblock {\em Discrete {\&} Computational Geometry}, 37(2):139--162, 2007.
\newblock URL: \url{https://doi.org/10.1007/s00454-006-1285-4}, \href
  {http://dx.doi.org/10.1007/s00454-006-1285-4}
  {\path{doi:10.1007/s00454-006-1285-4}}.

\bibitem{BelkC07}
Maria Belk and Robert Connelly.
\newblock Realizability of graphs.
\newblock {\em Discrete {\&} Computational Geometry}, 37(2):125--137, 2007.
\newblock URL: \url{https://doi.org/10.1007/s00454-006-1284-5}, \href
  {http://dx.doi.org/10.1007/s00454-006-1284-5}
  {\path{doi:10.1007/s00454-006-1284-5}}.

\bibitem{biedl}
Therese~C. Biedl and Lesvia Elena~Ruiz Vel{\'{a}}zquez.
\newblock Drawing planar 3-trees with given face areas.
\newblock {\em Computional Geometry}, 46(3):276--285, 2013.
\newblock \href {http://dx.doi.org/10.1016/j.comgeo.2012.09.004}
  {\path{doi:10.1016/j.comgeo.2012.09.004}}.

\bibitem{blum}
Lenore Blum, Felipe Cucker, Michael Shub, and Steve Smale.
\newblock {\em Complexity and real computation}.
\newblock Springer Science \& Business Media, 2012.
\newblock \href {http://dx.doi.org/10.1007/978-1-4612-0701-6}
  {\path{doi:10.1007/978-1-4612-0701-6}}.

\bibitem{CliqueRays}
Sergio Cabello, Jean Cardinal, and Stefan Langerman.
\newblock The clique problem in ray intersection graphs.
\newblock {\em Discrete {\&} Computational Geometry}, 50(3):771--783, Oct 2013.
\newblock URL: \url{https://doi.org/10.1007/s00454-013-9538-5}, \href
  {http://dx.doi.org/10.1007/s00454-013-9538-5}
  {\path{doi:10.1007/s00454-013-9538-5}}.

\bibitem{JGAA-218}
{Sergio} {Cabello}, {Marc} {van Kreveld}, {Giuseppe} {Liotta}, {Henk} {Meijer},
  {Bettina} {Speckmann}, and {Kevin} {Verbeek}.
\newblock Geometric simultaneous embeddings of a graph and a matching.
\newblock {\em Journal of Graph Algorithms and Applications}, 15(1):79--96,
  2011.
\newblock \href {http://dx.doi.org/10.7155/jgaa.00218}
  {\path{doi:10.7155/jgaa.00218}}.

\bibitem{cardinal2017intersection}
Jean Cardinal, Stefan Felsner, Tillmann Miltzow, Casey Tompkins, and Birgit
  Vogtenhuber.
\newblock Intersection graphs of rays and grounded segments.
\newblock In {\em International Workshop on Graph-Theoretic Concepts in
  Computer Science}, pages 153--166. Springer, 2017.

\bibitem{ER-Rays}
Jean Cardinal, Stefan Felsner, Tillmann Miltzow, Casey Tompkins, and Birgit
  Vogtenhuber.
\newblock Intersection graphs of rays and grounded segments.
\newblock {\em J. Graph Algorithms Appl.}, 22(2):273--295, 2018.
\newblock URL: \url{https://doi.org/10.7155/jgaa.00470}, \href
  {http://dx.doi.org/10.7155/jgaa.00470} {\path{doi:10.7155/jgaa.00470}}.

\bibitem{Connelly2003}
Robert Connelly, Erik~D. Demaine, and G{\"u}nter Rote.
\newblock Blowing up polygonal linkages.
\newblock {\em Discrete {\&} Computational Geometry}, 30(2):205--239, Aug 2003.
\newblock URL: \url{http://dx.doi.org/10.1007/s00454-003-0006-7}, \href
  {http://dx.doi.org/10.1007/s00454-003-0006-7}
  {\path{doi:10.1007/s00454-003-0006-7}}.

\bibitem{cook1971complexity}
Stephen~A Cook.
\newblock The complexity of theorem-proving procedures.
\newblock In {\em Proceedings of the third annual ACM symposium on Theory of
  computing}, pages 151--158. ACM, 1971.

\bibitem{Daskalakis:2006:CCN:1132516.1132527}
Constantinos Daskalakis, Paul~W. Goldberg, and Christos~H. Papadimitriou.
\newblock The complexity of computing a nash equilibrium.
\newblock In {\em Proceedings of the Thirty-eighth Annual ACM Symposium on
  Theory of Computing}, STOC '06, pages 71--78, New York, NY, USA, 2006. ACM.
\newblock URL: \url{http://doi.acm.org/10.1145/1132516.1132527}, \href
  {http://dx.doi.org/10.1145/1132516.1132527}
  {\path{doi:10.1145/1132516.1132527}}.

\bibitem{davenport1988real}
James~H. Davenport and Joos Heintz.
\newblock Real quantifier elimination is doubly exponential.
\newblock {\em Journal of Symbolic Computation}, 5(1):29--35, 1988.
\newblock \href {http://dx.doi.org/10.1016/S0747-7171(88)80004-X}
  {\path{doi:10.1016/S0747-7171(88)80004-X}}.

\bibitem{DBLP:journals/jacm/MesmayRST20}
Arnaud de~Mesmay, Yo'av Rieck, Eric Sedgwick, and Martin Tancer.
\newblock Embeddability in {$R^3$} is {NP}-hard.
\newblock {\em J. {ACM}}, 67(4):20:1--20:29, 2020.
\newblock URL: \url{https://dl.acm.org/doi/10.1145/3396593}.

\bibitem{kleist2}
Michael~G. Dobbins, Linda Kleist, Tillmann Miltzow, and Pawe{\l}
  Rz{\k{a}}{\.{z}}ewski.
\newblock $\forall\exists\mathbb{R}$-completeness and area-universality.
\newblock In {\em Graph-Theoretic Concepts in Computer Science (WG 2018)},
  volume 11159 of {\em LNCS}, pages 164--175. Springer, 2018.
\newblock \href {http://dx.doi.org/10.1007/978-3-030-00256-5_14}
  {\path{doi:10.1007/978-3-030-00256-5_14}}.

\bibitem{Dvorak}
Zden\v{e}k Dvo\v{r}\'{a}k, Daniel Kr\'{a}l', and Riste \v{S}krekovski.
\newblock Coloring face hypergraphs on surfaces.
\newblock {\em European Journal of Combinatorics}, 26(1):95 -- 110, 2005.
\newblock \href {http://dx.doi.org/10.1016/j.ejc.2004.01.003}
  {\path{doi:10.1016/j.ejc.2004.01.003}}.

\bibitem{Eppstein}
David Eppstein, Elena Mumford, Bettina Speckmann, and Kevin Verbeek.
\newblock Area-universal and constrained rectangular layouts.
\newblock {\em SIAM Journal on Computing}, 41(3):537--564, 2012.
\newblock \href {http://dx.doi.org/10.1137/110834032}
  {\path{doi:10.1137/110834032}}.

\bibitem{RobustComputation}
Jeff Erickson, Ivor {van der Hoog}, and Tillmann Miltzow.
\newblock {A} {F}ramework for {R}obust {R}ealistic {G}eometric {C}omputations.
\newblock {\em ar{X}iv}, abs/1912.02278, 2019.

\bibitem{kleistQuad}
William Evans, Stefan Felsner, Linda Kleist, and Stephen~G. Kobourov.
\newblock On area-universal quadrangulations.
\newblock {\em In preparation}, 2020+.

\bibitem{tableCartograms}
William~S. Evans, Stefan Felsner, Michael Kaufmann, Stephen~G. Kobourov,
  Debajyoti Mondal, Rahnuma~Islam Nishat, and Kevin Verbeek.
\newblock Table cartogram.
\newblock {\em Computional Geometry}, 68:174--185, 2018.
\newblock \href {http://dx.doi.org/10.1016/j.comgeo.2017.06.010}
  {\path{doi:10.1016/j.comgeo.2017.06.010}}.

\bibitem{felsner}
Stefan Fels\-ner.
\newblock Exploiting air-pressure to map floorplans on point sets.
\newblock {\em Journal of Graph Algorithms and Applications}, 18(2):233--252,
  2014.
\newblock \href {http://dx.doi.org/10.1007/978-3-319-03841-4_18}
  {\path{doi:10.1007/978-3-319-03841-4_18}}.

\bibitem{10.1007/BFb0021809}
Ulrich F{\"o}{\ss}meier and Michael Kaufmann.
\newblock Drawing high degree graphs with low bend numbers.
\newblock In Franz~J. Brandenburg, editor, {\em Graph Drawing (GD 95)}, volume
  1027 of {\em LNCS}, pages 254--266. Springer, 1996.
\newblock \href {http://dx.doi.org/10.1007/BFb0021809}
  {\path{doi:10.1007/BFb0021809}}.

\bibitem{DBLP:conf/birthday/Goldreich06a}
Oded Goldreich.
\newblock On promise problems: {A} survey.
\newblock In Oded Goldreich, Arnold~L. Rosenberg, and Alan~L. Selman, editors,
  {\em Theoretical Computer Science, Essays in Memory of Shimon Even}, volume
  3895 of {\em Lecture Notes in Computer Science}, pages 254--290. Springer,
  2006.
\newblock URL: \url{https://doi.org/10.1007/11685654\_12}, \href
  {http://dx.doi.org/10.1007/11685654\_12} {\path{doi:10.1007/11685654\_12}}.

\bibitem{Henning}
Henning Heinrich.
\newblock Ansätze zur {Entscheidung von Flächenuniversalität}.
\newblock {\em Masterarbeit, Technische Universit\"at Berlin}, 2018.
\newblock URL:
  \url{www.math.tu-berlin.de/~felsner/Diplomarbeiten/heinrich.pdf}.

\bibitem{DBLP:journals/dcg/KangM12}
Ross~J. Kang and Tobias M{\"{u}}ller.
\newblock Sphere and dot product representations of graphs.
\newblock {\em Discrete {\&} Computational Geometry}, 47(3):548--568, 2012.
\newblock \href {http://dx.doi.org/10.1007/s00454-012-9394-8}
  {\path{doi:10.1007/s00454-012-9394-8}}.

\bibitem{kleist1}
Linda Kleist.
\newblock Drawing planar graphs with prescribed face areas.
\newblock In Pinar Heggernes, editor, {\em Graph-Theoretic Concepts in Computer
  Science (WG 2016)}, volume 9941 of {\em LNCS}, pages 158--170. Springer,
  2016.
\newblock \href {http://dx.doi.org/10.1007/978-3-662-53536-3_14}
  {\path{doi:10.1007/978-3-662-53536-3_14}}.

\bibitem{kleist1Journal}
Linda Kleist.
\newblock Drawing planar graphs with prescribed face areas.
\newblock {\em Journal of Computational Geometry (JoCG)}, 9(1):290--311, 2018.
\newblock \href {http://dx.doi.org/10.20382/jocg.v9i1a9}
  {\path{doi:10.20382/jocg.v9i1a9}}.

\bibitem{kleist3}
Linda Kleist.
\newblock On the area-universality of triangulations.
\newblock In Therese Biedl and Andreas Kerren, editors, {\em Graph Drawing and
  Network Visualization (GD 2018)}, volume 11282 of {\em LNCS}, pages 333--346.
  Springer, 2018.
\newblock \href {http://dx.doi.org/10.1007/978-3-030-04414-5_23}
  {\path{doi:10.1007/978-3-030-04414-5_23}}.

\bibitem{kleistPhD}
Linda Kleist.
\newblock {Planar Graphs and Face Areas} -- {Area-Universality}.
\newblock {\em PhD thesis, Technische Universit\"at Berlin}, 2018.
\newblock \href {http://dx.doi.org/10.14279/depositonce-7674}
  {\path{doi:10.14279/depositonce-7674}}.

\bibitem{levin1973universal}
Leonid~Anatolevich Levin.
\newblock Universal sequential search problems.
\newblock {\em Problemy peredachi informatsii}, 9(3):115--116, 1973.

\bibitem{LubiwDrawingExtension}
Anna Lubiw, Tillmann Miltzow, and Debajyoti Mondal.
\newblock The complexity of drawing a graph in a polygonal region.
\newblock In Therese Biedl and Andreas Kerren, editors, {\em Graph Drawing and
  Network Visualization (GD 2018)}, volume 11282 of {\em LNCS}, pages 387--401.
  Springer, 2018.
\newblock \href {http://dx.doi.org/10.1007/978-3-030-04414-5_28}
  {\path{doi:10.1007/978-3-030-04414-5_28}}.

\bibitem{matousekSegments}
Ji{\v{r}}{\'{\i}} Matou{\v{s}}ek.
\newblock Intersection graphs of segments and $\exists\mathbb{R}$.
\newblock {\em CoRR}, abs/1406.2636, 2014.
\newblock \href {http://arxiv.org/abs/1406.2636} {\path{arXiv:1406.2636}}.

\bibitem{Matouek2011}
Ji{\v{r}}{\'{\i}} Matou{\v{s}}ek, Martin Tancer, and Uli Wagner.
\newblock Hardness of embedding simplicial complexes in $\mathbb{R}^d$.
\newblock {\em Journal of the European Mathematical Society}, pages 259--295,
  2011.
\newblock URL: \url{https://doi.org/10.4171/jems/252}, \href
  {http://dx.doi.org/10.4171/jems/252} {\path{doi:10.4171/jems/252}}.

\bibitem{mcdiarmid2013integer}
Colin McDiarmid and Tobias M{\"u}ller.
\newblock Integer realizations of disk and segment graphs.
\newblock {\em Journal of Combinatorial Theory, Series B}, 103(1):114--143,
  2013.

\bibitem{meister1769generalia}
A.L.F. Meister.
\newblock {\em Generalia de genesi figurarum planarum et inde pendentibus earum
  affectionibus}.
\newblock Bavarian Academy of Sciences, 1769.
\newblock URL: \url{https://books.google.pl/books?id=fOE\_AAAAcAAJ}.

\bibitem{DBLP:journals/corr/abs-1206-6360}
Tillmann Miltzow.
\newblock Augmenting a geometric matching is {NP}-complete.
\newblock {\em CoRR}, abs/1206.6360, 2012.
\newblock URL: \url{http://arxiv.org/abs/1206.6360}.

\bibitem{Mnev}
N.~E. Mnëv.
\newblock The universality theorems on the classification problem of
  configuration varieties and convex polytopes varieties.
\newblock In {\em Topology and Geometry: Rohlin Seminar}, volume 1346 of {\em
  LNM}, pages 527--543. Springer, 1988.
\newblock \href {http://dx.doi.org/10.1007/BFb0082792}
  {\path{doi:10.1007/BFb0082792}}.

\bibitem{Muller2017}
Carole Muller.
\newblock {E}xcluded {M}inors for {I}sometric {E}mbeddings of {G}raphs in
  $\ell_\infty^k$-spaces.
\newblock Master's thesis, {U}niversit\'e libre de {B}ruxelles, 2017.

\bibitem{GVK369342747}
{Martin J.} Osborne.
\newblock {\em An introduction to game theory}.
\newblock Oxford Univ. Press, New York, NY [u.a.], 2004.

\bibitem{Richter-Gebert95}
J{\"u}rgen Richter-Gebert.
\newblock Mn{\"e}v’s universality theorem revisited.
\newblock {\em S{\'e}minaire Lotaringien de Combinatoire}, 34, 1995.
\newblock URL: \url{http://eudml.org/doc/119012}.

\bibitem{richter1995realization}
J{\"u}rgen Richter-Gebert and G{\"u}nter~M Ziegler.
\newblock Realization spaces of 4-polytopes are universal.
\newblock {\em Bulletin of the American Mathematical Society}, 32(4):403--412,
  1995.
\newblock URL:
  \url{https://www.ams.org/journals/bull/1995-32-04/S0273-0979-1995-00604-X/S0273-0979-1995-00604-X.pdf}.

\bibitem{ringel1990equiareal}
Gerhard Ringel.
\newblock Equiareal graphs.
\newblock In Rainer Bodendiek and Klaus Wagner, editors, {\em Contemporary
  methods in graph theory}, pages 503--505. BI Wissenschaftsverlag, Mannheim,
  1990.

\bibitem{ROBERTSON2004325}
Neil Robertson and P.D. Seymour.
\newblock Graph minors. {XX}. {W}agner's conjecture.
\newblock {\em Journal of Combinatorial Theory, Series B}, 92(2):325 -- 357,
  2004.
\newblock Special Issue Dedicated to Professor W.T. Tutte.
\newblock URL:
  \url{http://www.sciencedirect.com/science/article/pii/S0095895604000784},
  \href {http://dx.doi.org/http://dx.doi.org/10.1016/j.jctb.2004.08.001}
  {\path{doi:http://dx.doi.org/10.1016/j.jctb.2004.08.001}}.

\bibitem{SchaeferHausdorf}
Marcus Schaefer.
\newblock Personal Communiction.

\bibitem{DBLP:conf/gd/Schaefer09}
Marcus Schaefer.
\newblock Complexity of some geometric and topological problems.
\newblock In David Eppstein and Emden~R. Gansner, editors, {\em Graph Drawing
  ({GD} 2009)}, volume 5849 of {\em LNCS}, pages 334--344. Springer, 2009.
\newblock \href {http://dx.doi.org/10.1007/978-3-642-11805-0\_32}
  {\path{doi:10.1007/978-3-642-11805-0\_32}}.

\bibitem{schaefer2013realizability}
Marcus Schaefer.
\newblock Realizability of graphs and linkages.
\newblock In {\em Thirty Essays on Geometric Graph Theory}, pages 461--482.
  Springer, 2013.

\bibitem{DBLP:journals/mst/SchaeferS17}
Marcus Schaefer and Daniel \v{S}tefankovi\v{c}.
\newblock Fixed points, {N}ash equilibria, and the existential theory of the
  reals.
\newblock {\em Theory of Computing Systems}, 60(2):172--193, 2017.
\newblock \href {http://dx.doi.org/10.1007/s00224-015-9662-0}
  {\path{doi:10.1007/s00224-015-9662-0}}.

\bibitem{thomassen}
Carsten Thomassen.
\newblock Plane cubic graphs with prescribed face areas.
\newblock {\em Combinatorics, Probability \& Computing}, 1:371--381, 1992.
\newblock \href {http://dx.doi.org/10.1017/S0963548300000407}
  {\path{doi:10.1017/S0963548300000407}}.

\bibitem{Floorplans}
Shmuel Wimer, Israel Koren, and Israel Cederbaum.
\newblock Floorplans, planar graphs, and layouts.
\newblock {\em IEEE Transactions on Circuits and Systems}, 35:267--278, 1988.
\newblock \href {http://dx.doi.org/10.1109/31.1739}
  {\path{doi:10.1109/31.1739}}.

\end{thebibliography}


\begin{thebibliography}{10}

\bibitem{AbrahamsenAM18STOC}
Mikkel Abrahamsen, Anna Adamaszek, and Tillmann Miltzow.
\newblock The art gallery problem is {$\exists\mathbb{R}$}-complete.
\newblock In {\em Symposium on Theory of Computing (STOC)}, pages 65--73.
  {ACM}, 2018.
\newblock \href {https://doi.org/10.1145/3188745.3188868}
  {\path{doi:10.1145/3188745.3188868}}.

\bibitem{NeuronETR}
Mikkel Abrahamsen, Linda Kleist, and Tillmann Miltzow.
\newblock Training neural networks is {$\exists \mathbb R$}-complete.
\newblock In {\em Conference on Neural Information Processing Systems
  (NeurIPS)}, 2021.
\newblock to appear.
\newblock \href {http://arxiv.org/abs/2102.09798} {\path{arXiv:2102.09798}}.

\bibitem{DynamicToolboxETRINV}
Mikkel Abrahamsen and Tillmann Miltzow.
\newblock Dynamic toolbox for {ETRINV}.
\newblock {\em CoRR}, abs/1912.08674, 2019.
\newblock \href {http://arxiv.org/abs/1912.08674} {\path{arXiv:1912.08674}}.

\bibitem{PackETR}
Mikkel Abrahamsen, Tillmann Miltzow, and Nadja Seiferth.
\newblock Framework for {ER}-completeness of two-dimensional packing problems.
\newblock In {\em Symposium on Foundations of Computer Science (FOCS)}, pages
  1014--1021. {IEEE}, 2020.
\newblock \href {https://doi.org/10.1109/FOCS46700.2020.00098}
  {\path{doi:10.1109/FOCS46700.2020.00098}}.

\bibitem{weekEmbeddings}
Hugo~A Akitaya, Radoslav Fulek, and Csaba~D T{\'o}th.
\newblock Recognizing weak embeddings of graphs.
\newblock {\em ACM Transactions on Algorithms (TALG)}, 15(4):1--27, 2019.
\newblock \href {https://doi.org/10.1145/3344549} {\path{doi:10.1145/3344549}}.

\bibitem{basu2006algorithms}
Saugata Basu, Richard Pollack, and Marie-Fran{\c{c}}oise Roy.
\newblock {\em Algorithms in real algebraic geometry}.
\newblock Springer-Verlag, 2006.
\newblock \href {https://doi.org/10.1007/3-540-33099-2}
  {\path{doi:10.1007/3-540-33099-2}}.

\bibitem{Belk07}
Maria Belk.
\newblock Realizability of graphs in three dimensions.
\newblock {\em Discrete {\&} Computational Geometry}, 37(2):139--162, 2007.
\newblock \href {https://doi.org/10.1007/s00454-006-1285-4}
  {\path{doi:10.1007/s00454-006-1285-4}}.

\bibitem{BelkC07}
Maria Belk and Robert Connelly.
\newblock Realizability of graphs.
\newblock {\em Discrete {\&} Computational Geometry}, 37(2):125--137, 2007.
\newblock \href {https://doi.org/10.1007/s00454-006-1284-5}
  {\path{doi:10.1007/s00454-006-1284-5}}.

\bibitem{biedl}
Therese~C. Biedl and Lesvia Elena~Ruiz Vel{\'{a}}zquez.
\newblock Drawing planar 3-trees with given face areas.
\newblock {\em Computional Geometry}, 46(3):276--285, 2013.
\newblock \href {https://doi.org/10.1016/j.comgeo.2012.09.004}
  {\path{doi:10.1016/j.comgeo.2012.09.004}}.

\bibitem{blum}
Lenore Blum, Felipe Cucker, Michael Shub, and Steve Smale.
\newblock {\em Complexity and real computation}.
\newblock Springer Science \& Business Media, 2012.
\newblock \href {https://doi.org/10.1007/978-1-4612-0701-6}
  {\path{doi:10.1007/978-1-4612-0701-6}}.

\bibitem{JGAA-218}
{Sergio} {Cabello}, {Marc} {van Kreveld}, {Giuseppe} {Liotta}, {Henk} {Meijer},
  {Bettina} {Speckmann}, and {Kevin} {Verbeek}.
\newblock Geometric simultaneous embeddings of a graph and a matching.
\newblock {\em Journal of Graph Algorithms and Applications}, 15(1):79--96,
  2011.
\newblock \href {https://doi.org/10.7155/jgaa.00218}
  {\path{doi:10.7155/jgaa.00218}}.

\bibitem{cardinal2017intersection}
Jean Cardinal, Stefan Felsner, Tillmann Miltzow, Casey Tompkins, and Birgit
  Vogtenhuber.
\newblock Intersection graphs of rays and grounded segments.
\newblock In {\em Graph-Theoretic Concepts in Computer Science (WG)}, pages
  153--166. Springer, 2017.
\newblock \href {https://doi.org/10.1007/978-3-319-68705-6_12}
  {\path{doi:10.1007/978-3-319-68705-6_12}}.

\bibitem{Connelly2003}
Robert Connelly, Erik~D. Demaine, and G{\"u}nter Rote.
\newblock Blowing up polygonal linkages.
\newblock {\em Discrete {\&} Computational Geometry}, 30(2):205--239, Aug 2003.
\newblock \href {https://doi.org/10.1007/s00454-003-0006-7}
  {\path{doi:10.1007/s00454-003-0006-7}}.

\bibitem{Cook}
Stephen~A. Cook.
\newblock The complexity of theorem-proving procedures.
\newblock In {\em Symposium on Theory of Computing (STOC)}, pages 151--158.
  ACM, 1971.
\newblock \href {https://doi.org/10.1145/800157.805047}
  {\path{doi:10.1145/800157.805047}}.

\bibitem{Daskalakis:2006:CCN:1132516.1132527}
Constantinos Daskalakis, Paul~W. Goldberg, and Christos~H. Papadimitriou.
\newblock The complexity of computing a nash equilibrium.
\newblock In {\em Symposium on Theory of Computing (STOC)}, pages 71--78. ACM,
  2006.
\newblock \href {https://doi.org/10.1145/1132516.1132527}
  {\path{doi:10.1145/1132516.1132527}}.

\bibitem{davenport1988real}
James~H. Davenport and Joos Heintz.
\newblock Real quantifier elimination is doubly exponential.
\newblock {\em Journal of Symbolic Computation}, 5(1):29--35, 1988.
\newblock \href {https://doi.org/10.1016/S0747-7171(88)80004-X}
  {\path{doi:10.1016/S0747-7171(88)80004-X}}.

\bibitem{kleist2}
Michael~G. Dobbins, Linda Kleist, Tillmann Miltzow, and Pawe{\l}
  Rz{\k{a}}{\.{z}}ewski.
\newblock $\forall\exists\mathbb{R}$-completeness and area-universality.
\newblock In {\em Graph-Theoretic Concepts in Computer Science (WG)}, volume
  11159 of {\em LNCS}, pages 164--175. Springer, 2018.
\newblock \href {https://doi.org/10.1007/978-3-030-00256-5_14}
  {\path{doi:10.1007/978-3-030-00256-5_14}}.

\bibitem{Dvorak}
Zden\v{e}k Dvo\v{r}\'{a}k, Daniel Kr\'{a}l', and Riste \v{S}krekovski.
\newblock Coloring face hypergraphs on surfaces.
\newblock {\em European Journal of Combinatorics}, 26(1):95 -- 110, 2005.
\newblock \href {https://doi.org/10.1016/j.ejc.2004.01.003}
  {\path{doi:10.1016/j.ejc.2004.01.003}}.

\bibitem{Eppstein}
David Eppstein, Elena Mumford, Bettina Speckmann, and Kevin Verbeek.
\newblock Area-universal and constrained rectangular layouts.
\newblock {\em SIAM Journal on Computing}, 41(3):537--564, 2012.
\newblock \href {https://doi.org/10.1137/110834032}
  {\path{doi:10.1137/110834032}}.

\bibitem{RobustComputation}
Jeff Erickson, Ivor van~der Hoog, and Tillmann Miltzow.
\newblock Smoothing the gap between np and er.
\newblock In {\em Symposium on Foundations of Computer Science (FOCS)}, pages
  1022--1033. IEEE, 2020.
\newblock \href {https://doi.org/10.1109/FOCS46700.2020.00099}
  {\path{doi:10.1109/FOCS46700.2020.00099}}.

\bibitem{kleistQuad}
{William} {Evans}, {Stefan} {Felsner}, {Linda} {Kleist}, and {Stephen}
  {Kobourov}.
\newblock On area-universal quadrangulations.
\newblock {\em Journal of Graph Algorithms and Applications}, 25(1):171--193,
  2021.
\newblock \href {https://doi.org/10.7155/jgaa.00555}
  {\path{doi:10.7155/jgaa.00555}}.

\bibitem{tableCartograms}
William~S. Evans, Stefan Felsner, Michael Kaufmann, Stephen~G. Kobourov,
  Debajyoti Mondal, Rahnuma~Islam Nishat, and Kevin Verbeek.
\newblock Table cartogram.
\newblock {\em Computional Geometry}, 68:174--185, 2018.
\newblock \href {https://doi.org/10.1016/j.comgeo.2017.06.010}
  {\path{doi:10.1016/j.comgeo.2017.06.010}}.

\bibitem{felsner}
Stefan Fels\-ner.
\newblock Exploiting air-pressure to map floorplans on point sets.
\newblock {\em Journal of Graph Algorithms and Applications}, 18(2):233--252,
  2014.
\newblock \href {https://doi.org/10.1007/978-3-319-03841-4_18}
  {\path{doi:10.1007/978-3-319-03841-4_18}}.

\bibitem{Henning}
Henning Heinrich.
\newblock Ansätze zur {Entscheidung von Flächenuniversalität}.
\newblock {\em Masterarbeit, Technische Universit\"at Berlin}, 2018.
\newblock URL:
  \url{www.math.tu-berlin.de/~felsner/Diplomarbeiten/heinrich.pdf}.

\bibitem{DBLP:journals/dcg/KangM12}
Ross~J. Kang and Tobias M{\"{u}}ller.
\newblock Sphere and dot product representations of graphs.
\newblock {\em Discrete {\&} Computational Geometry}, 47(3):548--568, 2012.
\newblock \href {https://doi.org/10.1007/s00454-012-9394-8}
  {\path{doi:10.1007/s00454-012-9394-8}}.

\bibitem{kleist1}
Linda Kleist.
\newblock Drawing planar graphs with prescribed face areas.
\newblock In {\em Graph-Theoretic Concepts in Computer Science (WG)}, volume
  9941 of {\em LNCS}, pages 158--170. Springer, 2016.
\newblock \href {https://doi.org/10.1007/978-3-662-53536-3_14}
  {\path{doi:10.1007/978-3-662-53536-3_14}}.

\bibitem{kleist1Journal}
Linda Kleist.
\newblock Drawing planar graphs with prescribed face areas.
\newblock {\em Journal of Computational Geometry}, 9(1):290--311, 2018.
\newblock \href {https://doi.org/10.20382/jocg.v9i1a9}
  {\path{doi:10.20382/jocg.v9i1a9}}.

\bibitem{kleist3}
Linda Kleist.
\newblock On the area-universality of triangulations.
\newblock In {\em Graph Drawing and Network Visualization (GD)}, volume 11282
  of {\em LNCS}, pages 333--346. Springer, 2018.
\newblock \href {https://doi.org/10.1007/978-3-030-04414-5_23}
  {\path{doi:10.1007/978-3-030-04414-5_23}}.

\bibitem{kleistPhD}
Linda Kleist.
\newblock {Planar Graphs and Face Areas} -- {Area-Universality}.
\newblock {\em PhD thesis, Technische Universit\"at Berlin}, 2018.
\newblock \href {https://doi.org/10.14279/depositonce-7674}
  {\path{doi:10.14279/depositonce-7674}}.

\bibitem{levin1973universal}
Leonid~Anatolevich Levin.
\newblock Universal sequential search problems.
\newblock {\em Problemy peredachi informatsii}, 9(3):115--116, 1973.

\bibitem{LubiwDrawingExtension}
Anna Lubiw, Tillmann Miltzow, and Debajyoti Mondal.
\newblock The complexity of drawing a graph in a polygonal region.
\newblock In {\em Graph Drawing and Network Visualization (GD)}, volume 11282
  of {\em LNCS}, pages 387--401. Springer, 2018.
\newblock \href {https://doi.org/10.1007/978-3-030-04414-5_28}
  {\path{doi:10.1007/978-3-030-04414-5_28}}.

\bibitem{matousekSegments}
Ji{\v{r}}{\'{\i}} Matou{\v{s}}ek.
\newblock Intersection graphs of segments and $\exists\mathbb{R}$.
\newblock {\em CoRR}, abs/1406.2636, 2014.
\newblock \href {http://arxiv.org/abs/1406.2636} {\path{arXiv:1406.2636}}.

\bibitem{mcdiarmid2013integer}
Colin McDiarmid and Tobias M{\"u}ller.
\newblock Integer realizations of disk and segment graphs.
\newblock {\em Journal of Combinatorial Theory, Series B}, 103(1):114--143,
  2013.
\newblock \href {https://doi.org/10.1016/j.jctb.2012.09.004}
  {\path{doi:10.1016/j.jctb.2012.09.004}}.

\bibitem{meister1769generalia}
A.L.F. Meister.
\newblock {\em Generalia de genesi figurarum planarum et inde pendentibus earum
  affectionibus}.
\newblock Bavarian Academy of Sciences, 1769.
\newblock URL: \url{https://books.google.pl/books?id=fOE\_AAAAcAAJ}.

\bibitem{DBLP:journals/corr/abs-1206-6360}
Tillmann Miltzow.
\newblock Augmenting a geometric matching is {NP}-complete, 2012.
\newblock \href {http://arxiv.org/abs/1206.6360} {\path{arXiv:1206.6360}}.

\bibitem{Mnev}
N.~E. Mnëv.
\newblock The universality theorems on the classification problem of
  configuration varieties and convex polytopes varieties.
\newblock In {\em Topology and Geometry: Rohlin Seminar}, volume 1346 of {\em
  LNM}, pages 527--543. Springer, 1988.
\newblock \href {https://doi.org/10.1007/BFb0082792}
  {\path{doi:10.1007/BFb0082792}}.

\bibitem{mohar}
Bojan Mohar and Carsten Thomassen.
\newblock {\em Graphs on surfaces}, volume~16.
\newblock Johns Hopkins University Press Baltimore, 2001.

\bibitem{Muller2017}
Carole Muller.
\newblock {E}xcluded {M}inors for {I}sometric {E}mbeddings of {G}raphs in
  $\ell_\infty^k$-spaces.
\newblock Master's thesis, {U}niversit\'e libre de {B}ruxelles, 2017.

\bibitem{GVK369342747}
{Martin J.} Osborne.
\newblock {\em An introduction to game theory}.
\newblock Oxford Univ. Press, New York, NY [u.a.], 2004.

\bibitem{Richter-Gebert95}
J{\"u}rgen Richter-Gebert.
\newblock Mn{\"e}v’s universality theorem revisited.
\newblock {\em S{\'e}minaire Lotaringien de Combinatoire}, 34, 1995.
\newblock URL: \url{http://eudml.org/doc/119012}.

\bibitem{richter1995realization}
J{\"u}rgen Richter-Gebert and G{\"u}nter~M Ziegler.
\newblock Realization spaces of 4-polytopes are universal.
\newblock {\em Bulletin of the American Mathematical Society}, 32(4):403--412,
  1995.
\newblock URL:
  \url{https://www.ams.org/journals/bull/1995-32-04/S0273-0979-1995-00604-X/S0273-0979-1995-00604-X.pdf}.

\bibitem{ringel1990equiareal}
Gerhard Ringel.
\newblock Equiareal graphs.
\newblock In Rainer Bodendiek and Klaus Wagner, editors, {\em Contemporary
  methods in graph theory}, pages 503--505. BI Wissenschaftsverlag, Mannheim,
  1990.

\bibitem{ROBERTSON2004325}
Neil Robertson and P.D. Seymour.
\newblock Graph minors. {XX}. {W}agner's conjecture.
\newblock {\em Journal of Combinatorial Theory, Series B}, 92(2):325 -- 357,
  2004.
\newblock Special Issue Dedicated to Professor W.T. Tutte.
\newblock \href {https://doi.org/10.1016/j.jctb.2004.08.001}
  {\path{doi:10.1016/j.jctb.2004.08.001}}.

\bibitem{SchaeferHausdorf}
Marcus Schaefer.
\newblock Personal Communiction.

\bibitem{DBLP:conf/gd/Schaefer09}
Marcus Schaefer.
\newblock Complexity of some geometric and topological problems.
\newblock In {\em Graph Drawing ({GD})}, volume 5849 of {\em LNCS}, pages
  334--344, 2009.
\newblock \href {https://doi.org/10.1007/978-3-642-11805-0\_32}
  {\path{doi:10.1007/978-3-642-11805-0\_32}}.

\bibitem{schaefer2013realizability}
Marcus Schaefer.
\newblock Realizability of graphs and linkages.
\newblock In {\em Thirty Essays on Geometric Graph Theory}, pages 461--482.
  Springer, 2013.
\newblock \href {https://doi.org/10.1007/978-1-4614-0110-0_24}
  {\path{doi:10.1007/978-1-4614-0110-0_24}}.

\bibitem{DBLP:journals/mst/SchaeferS17}
Marcus Schaefer and Daniel \v{S}tefankovi\v{c}.
\newblock Fixed points, {N}ash equilibria, and the existential theory of the
  reals.
\newblock {\em Theory of Computing Systems}, 60(2):172--193, 2017.
\newblock \href {https://doi.org/10.1007/s00224-015-9662-0}
  {\path{doi:10.1007/s00224-015-9662-0}}.

\bibitem{thomassen}
Carsten Thomassen.
\newblock Plane cubic graphs with prescribed face areas.
\newblock {\em Combinatorics, Probability \& Computing}, 1:371--381, 1992.
\newblock \href {https://doi.org/10.1017/S0963548300000407}
  {\path{doi:10.1017/S0963548300000407}}.

\bibitem{wikiIntercept}
{W}ikipedia{,} The Free~Encyclopedia.
\newblock Intercept theorem formula.
\newblock [Online; accessed 14-Jan-2020].

\bibitem{Floorplans}
Shmuel Wimer, Israel Koren, and Israel Cederbaum.
\newblock Floorplans, planar graphs, and layouts.
\newblock {\em IEEE Transactions on Circuits and Systems}, 35:267--278, 1988.
\newblock \href {https://doi.org/10.1109/31.1739} {\path{doi:10.1109/31.1739}}.

\end{thebibliography}

\end{document}